\newtheorem{corollary}{Corollary}[section]
\newtheorem{theorem}[corollary]{Theorem}
\theoremstyle{definition}
\newtheorem{remark}[corollary]{Remark}
\newtheorem{definition}[corollary]{Definition}
\newtheorem{example}[corollary]{Example}
\newcommand{\numset}[1]{\mathbb{#1}}
	\newcommand{\rr}{\numset{R}}
	\newcommand{\zz}{\numset{Z}}
	\newcommand{\nn}{\numset{N}}
	\newcommand{\one}{\mathbf{1}}
	\newcommand{\e}{\mathrm{e}}
		\newcommand{\Exp}[1]{\mathrm{e}^{#1}}
	\providecommand*{\diff}%
	{\@ifnextchar^{\DIfF}{\DIfF^{}}}
	\def\DIfF^#1{%
	\mathop{\mathrm{\mathstrut d}}%
	\nolimits^{#1}\gobblespace}
	\def\gobblespace{%
	\futurelet\diffarg\opspace}
	\def\opspace{%
	\let\DiffSpace\!%
	\ifx\diffarg(%
	\let\DiffSpace\relax
	\else
	\ifx\diffarg%
	\let\DiffSpace\relax
	\else
	\ifx\diffarg\{%
	\let\DiffSpace\relax
	\fi\fi\fi\DiffSpace}
	\renewcommand{\d}{\diff}
	\renewcommand{\P}{\mathbb{P}}
	\newcommand{\Q}{\mathbb{Q}}
	\DeclareMathOperator{\supp}{supp}
	\DeclareMathOperator{\tr}{tr}
\newcommand{\cA}{\mathcal{A}}
\newcommand{\cB}{\mathcal{B}}
\newcommand{\cF}{\mathcal{F}}
\newcommand{\cH}{\mathcal{H}}
\newcommand{\cP}{\mathcal{P}}
  \newcommand{\Sc}{s^{\textnormal{cross}}}
  \newcommand{\Sr}{s^{\textnormal{rel}}}
\title{Recurrence times, waiting times \\ and universal entropy production estimators}
\author{Giampaolo Cristadoro\textsuperscript{1}, Mirko Degli Esposti\textsuperscript{2}, Vojkan Jak\v{s}i\'{c}\textsuperscript{3}, Renaud Raqu\'{e}pas\textsuperscript{4}}
\date{}
\begin{document}

\maketitle

\begin{center}
\small
\begin{tabular}{c c c}
   1. Universit\`a degli Studi di Milano-Bicocca & & 2. Universit\`a di Bologna\\
   Dipartimento di Matematica e Applicazioni & & Dipartimento di Fisica e Astronomia ``Augusto Righi'' \\
	 via R.\ Cozzi 55 && via Irnerio 46 \\
   20125 Milano, Italy && 40126 Bologna, Italy \\
	 &&\\
   3. McGill University & & 4. New York University \\
   Department of Mathematics and Statistics & & Courant Institute of Mathematical Sciences \\
   1005--805 rue Sherbrooke~Ouest & & 251 Mercer Street\\
  Montr{\'e}al (Qu{\'e}bec) ~H3A\,0B9, Canada & & New York, NY 10012, United States \\
\end{tabular}
\end{center}

\begin{abstract}
  The universal typical-signal estimators of entropy and cross entropy based on the asymptotics of recurrence and waiting times play an important role in information theory. Building on their construction, we introduce and study universal typical-signal estimators of entropy production in the context of nonequilibrium statistical mechanics of one-sided shifts over finite alphabets.\\

	\noindent\textbf{MSC2020:} \emph{Primary} 82C05, 37B10; \emph{Secondary} 37B20, 37M25, 92D20.
\end{abstract}

% \tableofcontents

\section{Introduction}

The performance studies of the celebrated Lempel--Ziv coding algorithm \cite{LZ77, LZ78} have led to some deep insights into the specific entropy and relative entropy of stationary measures on shift spaces. Notable among those is the characterization of the specific entropy of a stochastic source in terms of the exponential asymptotics of recurrence times of a typical signal, and the related characterization of the specific cross entropy in terms of waiting times~\cite{WZ89,OW93}. Entropic estimators of this type have found diverse practical applications in information theory and related fields; see e.g.~\cite{KASW, GKB08, Ve}.  While the specific entropy and relative entropy are fundamental notions in statistical mechanics, large deviation theory and multifractal formalism~(see e.g.~\cite{OP,vEFS93,CO,Pf, Ge, Cl14}), the aforementioned information-theoretic insights seem to have found only very few theoretical applications in these fields~\cite{ACRV,CR05}.

This note is the first in a series of works dedicated to refinements of the mathematical theory of entropic estimators that originated in information theory, and to their theoretical and practical applications.
The goal of the present note is to illustrate this  research program on one specific problem in statistical mechanics: a typical-signal characterization of entropy production of stationary measures on shift spaces.
Entropy production is a fundamental notion in nonequilibrium statistical mechanics and we will review it briefly in Section­~\ref{ssec:ep}. For the most part, we will focus in this note on the technically simplest case of~$\psi$-mixing systems. In full generality, our results are stated and proved in Section~\ref{sec-general} and  the Appendix, and are further discussed in~\cite{CDEJRa}.
Although these extensions reach further  and are technically more  involved, conceptually, they follow closely the set of ideas introduced  in the $\psi$-mixing case.

%After completion of this work, we have learned about the closely related work~\cite{CR05}, where the same estimator of entropy production was introduced and studied in the context of Gibbs measures for potentials with summable variations.}
%We will comment on the relation between the two works in Section~\ref{sec-remarks}.
%The theoretical results of~\cite{CR05} have been used in ~\cite{SGM21,SG21} in numerical computations of entropy production of  DNA sequences; we will discuss this  last point further in Remark~\ref{rem-DNA}.

The present note is organized as follows. The basic entropic notions are reviewed in Section~\ref{sec-entropies}. Entropy production is
reviewed in Section~\ref{ssec:ep}, where we also state our result under~$\psi$-mixing, Theorem~\ref{basic-thm}.
In Section~\ref{sec-rw-int}, we give a telegraphic overview of entropic estimators based on recurrence and waiting times.
Theorem~\ref{basic-thm} is proven in Section~\ref{sec-proof-basic}.
In Section~\ref{sec-general}, we state and prove several generalizations of Theorem~\ref{basic-thm} which will be further discussed in \cite{CDEJRa}. In Section~\ref{sec-basic-ex}, we describe basic examples to which Theorem~\ref{basic-thm} and its generalizations apply.
Finally, in Section~\ref{sec-remarks}, we briefly discuss some  technical aspects of the proof, and comment on related works that we have learned about in the final stage of completion of this work. In~\cite{CR05}, the same estimator of entropy production was introduced and studied in the context of Gibbs measures for potentials with summable variations. The theoretical results of~\cite{CR05} have been used in~\cite{SGM21,SG21} in numerical computations of entropy production of  DNA sequences.%

\paragraph*{Acknowledgments} This work  was supported by the {\em Agence Nationale de la Recherche} through
the grant NONSTOPS (ANR-17-CE40-0006-01, ANR-17-CE40-0006-02, ANR-17-CE40-0006-03), and was partly developed during VJ’s and MDE's stays at the CY Advanced Studies, whose support is gratefully acknowledged. Another part of this work was done during MDE's stay at McGill University funded by Simons CRM Scholar-in-Residence Program.
Additional funding was provided by the CY Initiative of Excellence (\emph{Investissements d’Avenir}, grant ANR-16-IDEX-0008).
GC acknowledges partial support by the PRIN Grant 2017S35EHN ``Regular and stochastic behaviour in dynamical systems'' of the Italian Ministry of University and Research (MUR),  and by the UMI Group ``DinAmicI''.
VJ acknowledges the support of NSERC. Most of this work was done while RR was a post-doctoral researcher at CY Cergy Paris Universit\'e and supported by the LabEx MME-DII (\emph{Investissements d'Avenir}). Part of this work was also completed during RR's stay at the Centre de recherches math\'ematiques of Universit\'e de Montr\'eal, whose support is gratefully acknowledged.
The authors wish to thank T.\,Benoist and N.\,Cuneo for useful discussions.

\paragraph*{Data availability statement} The datasets analyzed during the current study are publicly available in the Genome Reference Consortium Human Build 38 repository, Patch Release 14~\cite{GRCh38}.

%%%%%%%%%%%
\section{Entropy, relative entropy and cross entropy}
\label{sec-entropies}
%%%%%%%%%%%%%

Throughout the paper, $\cA$ is a finite set, referred to as an~\emph{alphabet}. Elements of $\cA$ are called~\emph{letters}. Elements of~$\cA^n$ are called~\emph{words} and we use interchangeably the notation $a = (a_1,a_2,\dotsc, a_n)$ and $a = a_1 a_2 \dotsb a_n$.

Let~$\Omega$ be the set~$\cA^{\nn}$ of sequences with values in~$\cA$. Here and in what follows, the set~$\nn$ of natural numbers does not contain~$0$.
We denote a generic element of~$\Omega$ by $x = (x_k)_{k\in\nn}$. We use $x_1^n$ for the $n$-\emph{prefix} of $x$, {i.e.}\ the word $x_1^n := x_1 x_2 \dotsb x_n$. Similarly, $x_{k}^{k+m} := x_k x_{k+1} \dotsb x_{k+m-1}$. To each $a \in \cA^n$ we associate the \emph{basic cylinder}
\[
	[a] := \{ x \in \Omega : x_1^n = a \}.
\]
More generally, to any  $A \subseteq \cA^n$ we associate the subset $[A] = \{x \in \Omega : x_1^n\in A\}$ of~$\Omega$, which we also call a cylinder and denote by~$A$ as well.

We equip $\cA$ with the discrete topology and~$\Omega$  with the corresponding product topology. The set of all Borel probability measures on~$\Omega$ is denoted by~$\cP$ and is equipped with the topology of weak convergence.
The \emph{shift} map is defined by $\sigma : (x_{k})_{k\in\nn} \mapsto (x_{k+1})_{k\in\nn}$, and is  a continuous surjection on~$\Omega$.
The set of shift-invariant elements of $\cP$ is denoted by~$\cP_{\textnormal{inv}}$, and the set of ergodic elements of~$\cP_{\textnormal{inv}}$  by~$\cP_{\textnormal{erg}}$.

The \emph{specific entropy}~$s(\P)$ of~$\P\in \cP_{\textnormal{inv}}$ is defined by the limit
\begin{equation}
	s(\P) := -\lim_{n\to\infty} \frac 1n \sum_{a \in \cA^n} \P_n(a) \log \P_n(a),
\label{lim-en}
\end{equation}
where the logarithm is taken with base~$\Exp{}$ and~$\P_n$ is the $n$-th marginal of~$\P$, i.e.\ the probability measure on~$\cA^n$ defined by
\[
	\P_n(a)
	= \P([a]).
\]
for $a \in \cA^n$. The limit \eqref{lim-en} always exists and lies in~$[0,\log|\cA|]$ by Fekete's lemma for subadditive sequences. The entropy map
$\cP_{\textnormal{inv}}\ni \P \mapsto  s(\P)$ is affine and upper-semicontinuous.

Two other entropic quantities are at the heart of the present article. They both involve two shift invariant measures.
The {\em cross entropy} of $\P \in \cP_{\textnormal{inv}}$ with respect to $\Q \in \cP_{\textnormal{inv}}$ is defined by
\[
	\Sc(\P|\Q) := -\lim_{n\to\infty} \frac 1n \sum_{a \in {\cal A}^n} \P_n(a) \log \Q_n(a)
\]
whenever the limit exists (or the sequence properly diverges as $n\to\infty$), in which case the \emph{relative entropy} of~$\P$ with respect to~$\Q$ is defined as
\[
	\Sr(\P|\Q) := \Sc(\P|\Q) - s(\P).
\]
By an elementary convexity argument, $\Sc(\P|\Q) \geq \Sr(\P|\Q) \geq 0$ whenever well defined. It is known that the cross entropy may fail to exist; see {e.g.}~\cite[\S{A.5.2}]{vEFS93} or Exercise~1.c in~\cite[\S{II.1.e.}]{Shi}.

Relative entropy is also referred to as the Kullback--Leibler divergence and plays a fundamental role in the theory of hypothesis testing. The celebrated Stein lemma gives an operational interpretation of relative entropy in this context;
see \cite[\S{3.4}]{DeZe}, \cite[\S{2.3}]{BJPP18}  for additional information, and~\cite[\S{4.3}]{Jak} for pedagogical introduction and historical perspective to this topic.

\section{Entropy production}
\label{ssec:ep}

Our main interest is in estimating the mean \emph{entropy production} associated to $\P\in \cP_{\textnormal{inv}}$. It is denoted by~$\operatorname{ep}(\P)$ and defined as
\[
	\operatorname{ep}(\P) := \Sr(\P|\widehat{\P})
\]
whenever this relative entropy exists, where $\widehat{\P}\in \cP_{\textnormal{inv}}$ is the \emph{reversal} of the measure~$\P$. This reversal depends on the choice of an involution $\theta : \cA \to \cA$ and is defined by the marginals
\[
	\widehat{\P}_n(a) = \P_n(\widehat{a}),
\]
where the reversal $\widehat{a}$ of the finite word
$a = a_1 a_2 \dotsb a_n \in \cA^n$ is
\[
	\widehat{a} := \theta(a_n) \theta(a_{n-1}) \dotsb \theta(a_1).
\]
The choice of involution is often dictated by the context. A first example of such an involution is of course the identity, denoted~$\operatorname{id}$. If one is interested in the time reversal of a physical systems, some variables (e.g.\ spin) should naturally change sign under time reversal, while some others should not, and the presence of~$\theta$ allows one to take this into account.
Another important example coming from biology is the unique involution $\theta_{\textnormal{Ch}}$ on the alphabet $\{\texttt{C},\texttt{G},\texttt{A},\texttt{T}\}$ such that $\theta_{\textnormal{Ch}}(\texttt{C}) = \texttt{G}$ and $\theta_{\textnormal{Ch}}(\texttt{A}) = \texttt{T}$, which is relevant for the study of Chargaff symmetries in DNA sequences;
see \cite{RKC68,ACDE18} and Remark~\ref{rem-DNA}.
When the choice of~$\theta$ is ambiguous, we use the notation~$\operatorname{ep}(\P;\theta)$ to express the explicit dependence.

Given the general interpretation of relative entropy, $\operatorname{ep}(\P)$ is key to the hypothesis-testing problem for the pair $(\P,\widehat{\P})$, sometimes called ``hypothesis testing of the arrow of time''  when the indices along sequences are interpreted as time variables. In other words, entropy production is a measure of irreversibility of the source giving the outcomes~$x_1, x_2, \dotsc$, $x_n, \dotsc$ For example, for a Markov measure $\P$ coming from a stationary Markov chain $(\pi, P)$ and with $\theta = \operatorname{id}$,
one computes
\[
	\operatorname{ep}(\P) = \sum_{a,b\in\cA} \frac{\pi_a P_{a,b} - \pi_b P_{b,a}}{2} \log \frac{\pi_a P_{a,b}}{\pi_b P_{b,a}},
\]
and notices that~$\operatorname{ep}(\P)$ vanishes if and only if the detailed balance condition $\pi_a P_{a,b} = \pi_b P_{b,a}$ holds.
For further discussions of  entropy production from this general ``hypothesis-testing perspective'', we refer the reader to \cite{JOPS12, BJPP18, CJPS18, CJPS19, BCJP21}.
For  the physics perspective, see  the foundational works \cite{ECM93, GC95a, GC95b, Ma99, LS99} and the reviews \cite{Ru99, JPR11}.

Computing entropy production using the definition of relative entropy requires knowledge of all marginals of~$\P$, information which is often not accessible in practice: think of sequences of measurements coming from a system with some unknown parameters, or of DNA sequences. In this context, motivated by universal lossless data compression algorithms,  it is natural to  look  for a sequence of universal estimators of~$\operatorname{ep}(\P)$ which can be computed as a function of a sample sequence $x$ only. The key to their construction are the recurrence time functions~$R_n$ and~$\widehat R_n$ below.

\begin{definition}\label{rec-tim-def}
	For $n\in\nn$ and $x \in \Omega$,
	\[
	R_n(x) := \inf\left\{k \geq 1 : x_{n+k}^{n+k+n-1} = x_1^n \right\},
	\]
	and
	\[
	  \widehat{R}_n(x) := \inf\left \{ k \geq 1 : x_{n+k}^{n+k+n-1}= \widehat{x_{1}^{n}} \right\}.
	\]
\end{definition}

\begin{remark}
	We have chosen a definition of the recurrence time function~$R_n$ that does not allow for overlaps with the original prefix, as Ornstein and Weiss did in~\cite{OW93}. One could alternatively consider, as does Kontoyiannis in~\cite{Ko98,Ko2}, a definition which does allow for overlaps, i.e.\
	\[
		R'_n(x) := \inf\{k \geq 1 : x_{1+k}^{1+k+n-1} = x_1^n \}.
	\]
	The first choice is more convenient for some of our computations.
	Because there exists $m_n(x) \leq n$ such that ${R}_n(x) = R'_n(\sigma^{m_n(x)-1}(x)) - (n-m_n(x))$, one easily shows that none of the results discussed in this paper  is affected by replacing $R_n$ with~$R'_n$.
\end{remark}

With the convention that $\inf\emptyset = \infty$, both $R_n$ and~$\widehat{R}_n$ take values in
$\nn \cup\{\infty\}$. Taking the logarithm with the convention $\log \infty = \infty$ gives that both $\log R_n$ and~$\log\widehat{R}_n$ take values in $[0,\infty]$. The main idea behind the construction of universal entropy production estimator is to compare how much sooner the $n$-prefix reappears compared to its reversal by looking at the exponential rate of growth of the ratio~$\widehat{R}_n/{R}_n$ as~$n\to\infty$. If the process is reversible (e.g.\ an independent and identically distributed process or a mixing Markov chain satisfying detailed balance), then one expects this ratio to typically not grow exponentially fast with~$n$. On the other hand, if there is a clear direction of time in the underlying process which is sufficiently regular (e.g.\ a mixing Markov chain violating detailed balance), then one expects the reversed prefix to typically appear much later than the original prefix reappears, by a factor which grows exponentially fast with~$n$.

\begin{example}
	{\em Consider~$\cA = \{0,1\}$, the identity involution on~$\cA$, and a sequence
	\[
		x = {0100}{11010}{100110100111010}01001010\dotsc
	\]
	With $n=4$, computing $R_4(x)$ amounts to finding the first reoccurrence of the prefix $x_1^4 = 0100$ which does not overlap; here $R_4(x) = 5$,
	\begin{align*}
		x &= \underline{0100}{\underbrace{1101\underline{0}}_{5}\!\underline{100}110100111010}{010}01010\dotsc
	\end{align*}
	On the other hand, computing $\widehat{R}_4(x)$ amounts to finding the first occurrence of the reversal of that prefix, i.e.~$0010$, which does not overlap with the original prefix; here $\widehat{R}_4(x) = 20$,
	\begin{align*}
		x &= \underline{0100}\!\underbrace{{1101\underline{0}}\underline{100}110100111010}_{20}\hspace{-.6em}\widehat{\phantom{0} 010}01010\dotsc
	\end{align*}
	The reversed prefix takes 4 times as long as the original prefix to appear down the sequence.}
\end{example}

The result at the heart of this note is Theorem~\ref{basic-thm} below, which gives a technically simple and practically important case in which $\tfrac{1}{n}\log {\widehat{R}_n}/{R_n}$ does almost surely converge to~$\operatorname{ep}(\P)$. The hypotheses are formulated in terms of the $\psi$-mixing coefficients recalled below; for generalizations see Section~\ref{sec-general}. We emphasize the {\em universality} aspect of \eqref{ajde}: the sequence of estimators
$\log {\widehat{R}_n}/{R_n}$ is defined deterministically without reference to any random source and the $\P$-almost sure validity of the convergence for a large class of measures~$\P$ is at the essence of this universality. As stated, Theorem~\ref{basic-thm} covers many important examples: mixing Markov and multi-step 
 (a.k.a.~multi-level) Markov measures, mixing hidden Markov models (with finite hidden alphabets), Gibbs measures in the sense of Bowen, and mixing unravelings of quantum instruments; see Section~\ref{sec-basic-ex}. The  generalizations discussed in Section~\ref{sec-general} dispense with mixing altogether.

\begin{theorem}
\label{basic-thm}
	If $\P\in \cP_{\textnormal{inv}}$ is $\psi$-mixing with $\psi^*_\P(0) < \infty$, then
	\begin{equation}\label{ajde}
		\lim_{n\rightarrow \infty} \frac 1n \log\frac{\widehat{R}_n(x)}{R_n(x) } =\operatorname{ep}(\P)
	\end{equation}
	for $\P$-almost all~$x$.
\end{theorem}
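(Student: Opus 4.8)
The plan is to estimate the numerator and the denominator of $\widehat R_n/R_n$ separately. Writing
\[
  \frac 1n \log\frac{\widehat R_n(x)}{R_n(x)} = \frac 1n\log\widehat R_n(x) - \frac 1n\log R_n(x),
\]
the goal becomes to show that, $\P$-almost surely, $\frac 1n\log R_n(x)\to s(\P)$ and $\frac 1n\log\widehat R_n(x)\to \Sc(\P|\widehat\P)$, both limits understood in $[0,\infty]$; since $\Sc(\P|\widehat\P)-s(\P)=\Sr(\P|\widehat\P)=\operatorname{ep}(\P)$, this yields the theorem. A preliminary point is that $\psi$-mixing with $\psi^*_\P(0)<\infty$ makes $\Sc(\P|\widehat\P)$, hence $\operatorname{ep}(\P)$, well defined in $[0,\infty]$: using $\widehat\P_{m+n}(a)\le \psi^*_\P(0)\,\widehat\P_m(a_1^m)\widehat\P_n(a_{m+1}^{m+n})$, the sequence $-\sum_{a}\P_n(a)\log\widehat\P_n(a)$ is superadditive up to the additive constant $\log\psi^*_\P(0)$, so it has a limit rate by Fekete's lemma. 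The degenerate case $\operatorname{ep}(\P)=\infty$ does occur --- for instance when $\P$ is a primitive Markov measure for which the reversed word $\widehat{x_1^n}$ is eventually forbidden --- and the argument below then produces $\widehat R_n=\infty$ eventually, $\P$-a.s., consistently with the convention $\log\infty=\infty$. The convergence $\frac1n\log R_n\to s(\P)$ is the Ornstein--Weiss recurrence theorem~\cite{OW93}, valid for every ergodic measure ($\psi$-mixing implies ergodicity); I would either cite it or reprove it here from the two ingredients used below --- a union bound with Borel--Cantelli for $R_n\gtrsim\Exp{n(s(\P)-\varepsilon)}$, and Kac's lemma for $R_n\lesssim\Exp{n(s(\P)+\varepsilon)}$.

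So the heart of the matter is $\frac1n\log\widehat R_n(x)\to\Sc(\P|\widehat\P)$. The heuristic is that, since $\P$ is $\psi$-mixing, the coordinates $x_{n+1},x_{n+2},\dots$ are almost independent of the prefix $x_1^n$ that fixes the target word $w_n:=\widehat{x_1^n}$, and they carry the law $\P$; hence $\widehat R_n$ is of order $1/\P_n(w_n)=1/\widehat\P_n(x_1^n)$, while the generalized Shannon--McMillan--Breiman theorem (applicable since $\widehat\P$ is again $\psi$-mixing) gives $-\tfrac1n\log\widehat\P_n(x_1^n)\to\Sc(\P|\widehat\P)$ for $\P$-a.e.\ $x$. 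Fix $\varepsilon>0$ and set $C:=\psi^*_\P(0)$. The lower bound $\liminf_n\tfrac1n\log\widehat R_n\ge\Sc(\P|\widehat\P)$ is the easy half: by a union bound over the first $T_n:=\lfloor\Exp{n(\Sc(\P|\widehat\P)-\varepsilon)}\rfloor$ positions, the decoupling estimate $\P[x_1^n=a,\ x_{n+k}^{n+k+n-1}=\widehat a]\le C\,\P_n(a)\widehat\P_n(a)$ (valid for every $k\ge1$, since $\psi^*_\P(k-1)\le C$), and restriction to the typical event $\{\widehat\P_n(x_1^n)\le\Exp{-n(\Sc(\P|\widehat\P)-\varepsilon/2)}\}$ --- whose complement occurs only finitely often $\P$-a.s.\ by the generalized SMB theorem --- one gets that $\P[\widehat R_n\le T_n,\ \text{typical}]$ is summable in $n$; Borel--Cantelli and $\varepsilon\downarrow0$ then finish this half, and when $\Sc(\P|\widehat\P)=\infty$ the same computation with an arbitrary $M$ in place of $\Sc(\P|\widehat\P)-\varepsilon$ gives $\widehat R_n=\infty$ eventually in the forbidden-word case.

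For the upper bound one must show the reversed prefix reappears soon enough: $\P$-a.s.\ eventually $\widehat R_n(x)\le T_n':=\lfloor\Exp{n(\Sc(\P|\widehat\P)+\varepsilon)}\rfloor$. The plan here is a second-moment (Paley--Zygmund) argument on the number $N_n$ of occurrences of $w_n$ among a suitably spaced family of positions below $T_n'$ in $x_{n+1},x_{n+2},\dots$: conditioning on $x_1^n$ and using two-sided $\psi$-mixing, $\E[N_n\mid x_1^n]$ is comparable to $(\#\text{positions})\cdot\widehat\P_n(x_1^n)$, which on the typical event $\{\widehat\P_n(x_1^n)\ge\Exp{-n(\Sc(\P|\widehat\P)+\varepsilon/2)}\}$ is $\gtrsim\Exp{n\varepsilon/2}\to\infty$, while the pairwise correlations of the occurrence indicators are at most $(1+o(1))$ times the product of the marginals by $\psi$-mixing, the overlapping pairs contributing only $O(\E[N_n\mid x_1^n])$, negligible against $(\E[N_n\mid x_1^n])^2$. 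Paley--Zygmund then bounds $\P[\widehat R_n>T_n'\mid x_1^n]=\P[N_n=0\mid x_1^n]$, summably on the typical event, and Borel--Cantelli together with the generalized SMB theorem give $\limsup_n\tfrac1n\log\widehat R_n\le\Sc(\P|\widehat\P)+\varepsilon$, $\P$-a.s.

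I expect this last upper bound to be the main obstacle, for two reasons. First, one genuinely needs the \emph{sharp} generalized Shannon--McMillan--Breiman identity $-\tfrac1n\log\widehat\P_n(x_1^n)\to\Sc(\P|\widehat\P)$ $\P$-a.s.: the elementary supermartingale argument for the likelihood ratio $\widehat\P_n(x_1^n)/\P_n(x_1^n)$ only yields $\liminf\bigl(-\tfrac1n\log\widehat\P_n(x_1^n)\bigr)\ge s(\P)$, which is too weak once $\operatorname{ep}(\P)>0$, so one must invoke the a.s.\ cross-entropy convergence, available here precisely because $\P$ and $\widehat\P$ are both $\psi$-mixing. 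Second, passing from convergence in probability to the almost sure statement in the second-moment step requires enough uniformity in the (a priori unquantified) decay of the $\psi$-mixing coefficients; if this cannot be arranged by a suitable choice of the spacing, one would instead prove the limit along a sparse subsequence of $n$, where the error terms are summable, and interpolate using the slow variation of $\widehat R_n$ in $n$. Throughout, the hypothesis $\psi^*_\P(0)<\infty$ is used because $\widehat R_n$ starts its search at the coordinate immediately after the prefix, so each decoupling step pits past against future with \emph{no} gap.
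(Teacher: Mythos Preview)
Your overall reduction\,---\,split the ratio, invoke Wyner--Ziv--Ornstein--Weiss for $R_n$, and prove $\tfrac1n\log\widehat R_n \to \Sc(\P|\widehat\P)$ by comparison with $-\tfrac1n\log\widehat\P_n(x_1^n)$\,---\,coincides with the paper's Step~1, as does your lower-bound argument via union bound, upper decoupling, and Borel--Cantelli (the paper's Step~2). The almost sure convergence $-\tfrac1n\log\widehat\P_n(x_1^n)\to\Sc(\P|\widehat\P)$ is obtained in the paper from Kingman's subadditive ergodic theorem applied to the supermultiplicative structure you note, rather than from a ``generalized SMB'' result; this is a minor difference.

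The genuine gap is in your upper bound. The second-moment route does not close under the stated hypotheses. Conditionally on $x_1^n=a$, for occurrence indicators $I_i,I_j$ at positions separated by a gap $\ell$ one only gets
\[
\frac{\E[I_iI_j\mid a]}{\E[I_i\mid a]\,\E[I_j\mid a]}\ \le\ \frac{\psi^*_\P(\ell)}{\psi'_\P(\ell)},
\]
which is a \emph{fixed constant}, not $1+o(1)$, unless $\ell=\ell_n\to\infty$. Even then the Chebyshev bound $\P[N_n=0\mid a]\le \Va(N_n\mid a)/(\E[N_n\mid a])^2$ is of order $\psi^*_\P(\ell_n)/\psi'_\P(\ell_n)-1$ plus a summable term, and with no rate on the $\psi$-mixing coefficients this residual is not summable in $n$. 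Your fallback\,---\,pass to a sparse subsequence and interpolate using ``slow variation of $\widehat R_n$ in $n$''\,---\,also fails: $\widehat R_n$ has no useful monotonicity, since incrementing $n$ replaces the target $\widehat{x_1^n}$ by a genuinely different word (it contains the old target only as a suffix, and the search window shifts), so an early hit for $\widehat{x_1^{n_k}}$ gives no information about $\widehat{x_1^n}$ for $n_k<n<n_{k+1}$.

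The paper's Step~3 avoids second moments entirely. After one application of upper decoupling (your $\psi^*_\P(0)<\infty$) to separate the prefix from the search region, and the change of variables $b\mapsto\widehat b$ to turn the $\widehat a$-hitting problem into a waiting-time problem under $\P\times\widehat\P$, it bounds the no-hit probability by a telescoping product: with blocks spaced by a \emph{fixed} $\ell$ such that $\psi'_\P(\ell)>0$,
\[
(\P\times\widehat\P)\bigl(\{(x,y): W_n(x,y)>t\}\cap[a]\times\Omega\bigr)
\ \le\ \P_n(a)\,\bigl(1-\psi'_{\widehat\P}(\ell)\,\widehat\P_n(a)\bigr)^{J},
\]
with $J\asymp t/(n+\ell)$. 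Taking $t=\Exp{n\epsilon}/\widehat\P_n(a)$ makes the right-hand side of order $(n+\ell)\Exp{-n\epsilon}$ after summing over $a$, which is summable with no rate on the mixing coefficients required. This is exactly Kontoyiannis' waiting-time estimate, and it is the missing idea in your upper bound.
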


\begin{remark} 
\label{pre-remark-ac-1}
	We do not require that $\P_n\ll \widehat \P_n$ for all  $n$. If this relation fails for some $n$, then
	both sides in \eqref{ajde} are equal to $\infty$; see Remark~\ref{remark-ac-1}.
\end{remark}

We recall definition of the \emph{$\psi$-mixing coefficients} of a $\sigma$-invariant measure~$\P$:
\[
	\psi^*_{\P}(\ell) := \sup \left\{\frac{\P([a] \cap \sigma^{-n-\ell}[b]) }{\P([a])\P([b])}: a \in \supp \P_n, n \in \nn, b\in \supp \P_m, m \in\nn \right\}
\]
and
\[
	\psi'_{\P}(\ell) := \inf \left\{\frac{\P([a] \cap \sigma^{-n-\ell}[b]) }{\P([a])\P([b])}: a \in \supp \P_n, n \in \nn, b\in \supp \P_m, m \in\nn \right\}.
\]
These coefficients are respectively nonincreasing and nondecreasing in~$\ell$.
The measure~$\P$ is said to be \emph{$\psi$-mixing} if $\psi^*_{\P}(\ell) \to 1$ and $\psi'_{\P}(\ell) \to 1$ as $\ell \to \infty$.
Note that~$\P$ is $\psi$-mixing if and only if~$\widehat \P$ is $\psi$-mixing. For an excellent review of strong mixing notions, see~\cite{Br05}.%
\footnote{Some remarks are in order to ease comparisons with the setup of Bradley~\cite{Br83,Br05}.
First, note that the coefficients do not change if we replace $[a]$ with~$[A]$, $A \subseteq \cA^n$ and $[b]$ with $[B]$, $B \subseteq \cA^m$:
for example, if $\P([a] \cap \sigma^{-n-\ell}[b]) \leq \psi_{\P}^*(\ell)\P([a])\P([b])$ for all~$a$ and~$b$, then summing over $a \in A$ and~$b\in B$ gives $\P([A] \cap \sigma^{-n-\ell}([B])) \leq \psi_{\P}^*(\ell)\P([A])\P([B])$.
Second, because $m$ is arbitrary, we have a generating semi-algebra at hand and a standard approximation argument shows that we can replace the requirement that $B \subseteq \cA^m$ for some~$m \in \nn$ with the requirement that~$B$ be Borel measurable.
Finally, since we are only interested in $\sigma$-invariant measures on~$\cA^\nn$\,---\,which are naturally in one-to-one correspondence with $\sigma$-invariant measures on~$\cA^\zz$\,---, the definitions then translate to Bradley's definitions on~$\cA^\zz$ exploiting $\sigma$-invariance and yet another approximation procedure by sets now in the semi-algebra built by shifting by~$n$ cylinders naturally associated to sets of the form $A \subseteq \cA^n$ for some~$n$.}

The reader might notice that the proof of Theorem~\ref{basic-thm}\,---\,provided in Section~\ref{sec-proof-basic} and discussed  in Section~\ref{sec-remarks}\,---\,uses only that  $\psi^*_\P(0) < \infty$ and that $\psi'_\P(\ell) > 0$ for some~$\ell \in \nn$. However, the following theorem of Bradley\footnote{The original result, Theorem~1 in~\cite{Br83}, requires the measure~$\P$  to be mixing in the sense of ergodic theory. However, later in the same paper it is  remarked that this extra assumption is superfluous to derive that $\psi'(\ell') > 0$ for some~$\ell'$ implies $\psi'(\ell) \to 1$ as $\ell \to \infty$. As noted by Bradley in his later review~\cite[\S{4.1}]{Br05}, the fact that $\psi'(\ell) \to 0$ in turn implies $\phi$-mixing\,---\,and thus mixing in the sense of ergodic theory\,---\,can be combined with the original result to obtain the variant of the result stated here. According to Bradley in this same review, this version of the result was included in later works at the suggestion of Denker. It is also worth noting that the proof of (any version of) the result relies heavily on the earlier work~\cite{Br80} on $\phi$-mixing.} shows that assuming that~$\P$ is $\psi$-mixing does not represent an additional restriction. In the same article~\cite{Br83}, Bradley provides an example of a $\psi$-mixing measure~$\P$ for which $\psi^*_{\P}(0) = \infty$.
\begin{theorem}[Bradley, 1983]
\label{bradley-thm}
	Let $\P \in \cP_{\textnormal{inv}}$. If there exists $\ell ^*, \ell' \in \nn$ such that $\psi_\P^\ast(\ell^*)<\infty$ and $\psi_\P^\prime(\ell')>0$, then $\P$ is $\psi$-mixing.
\end{theorem}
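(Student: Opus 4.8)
Theorem~\ref{bradley-thm} is Bradley's; the natural plan is to cite \cite{Br83}, and what follows describes the structure of that argument together with the one step I would take on faith. Write
\[
	\phi_{\P}(\ell) := \sup\left\{ \left| \frac{\P([a]\cap\sigma^{-n-\ell}[b])}{\P([a])} - \P([b]) \right| : a\in\supp\P_n,\ b\in\cA^m,\ n,m\in\nn \right\}
\]
for the $\phi$-mixing coefficient. The first, elementary observation is that both hypotheses \emph{propagate} to all large lags: nesting the ``past'' cylinder\,---\,replacing $[a]$ by $[a]\cap\sigma^{-n-\ell_1}[b]$, a union of cylinders and hence an admissible test set by the footnote\,---\,then applying the defining bound at lag $\ell_2$ and summing over the inner word $b\in\cA^m$, one gets
\[
	\psi^*_{\P}(\ell_1+m+\ell_2)\le \psi^*_{\P}(\ell_1)\,\psi^*_{\P}(\ell_2), \qquad \psi'_{\P}(\ell_1+m+\ell_2)\ge \psi'_{\P}(\ell_1)\,\psi'_{\P}(\ell_2)
\]
for all $\ell_1,\ell_2\ge 0$ and $m\ge 1$; in particular $\psi^*_{\P}(\ell)\le \psi^*_{\P}(\ell^*)^2<\infty$ and $\psi'_{\P}(\ell)\ge \psi'_{\P}(\ell')^2>0$ for all large $\ell$. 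Bounding $\P(\sigma^{-n-\ell}[b]\mid[a])$ and its complement below by $\psi'_{\P}(\ell)$ times the corresponding unconditional probabilities also gives $\phi_{\P}(\ell)\le 1-\psi'_{\P}(\ell)$, so $\phi_{\P}(\ell)\le 1-\psi'_{\P}(\ell')^2<1$ for all large $\ell$.

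The core of the proof, which I would quote rather than reprove, is the zero--one law for $\phi$-mixing of \cite{Br80}: if $\phi_{\P}(\ell_0)<1$ for some $\ell_0$, then $\phi_{\P}(\ell)\to 0$. By the previous paragraph this applies, so $\P$ is $\phi$-mixing and therefore mixing in the sense of ergodic theory\,---\,which is precisely the standing hypothesis of Bradley's original Theorem~1 in \cite{Br83}, so imposing it costs nothing here. Feeding the now-mixing $\P$ into that theorem upgrades the one-sided inputs to $\psi^*_{\P}(\ell)\to 1$ (from $\psi^*_{\P}(\ell^*)<\infty$) and $\psi'_{\P}(\ell)\to 1$ (from $\psi'_{\P}(\ell')>0$), and these two limits together are exactly the statement that $\P$ is $\psi$-mixing.

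The only real obstacle, were a self-contained proof demanded, would lie in this zero--one law and, inside \cite{Br83}, in the passage from ``$\psi'$ positive (equivalently $\phi$ bounded away from $1$) at one lag'' to ``$\psi'(\ell)\to 1$'': the elementary estimates above show merely that the mixing coefficients stay confined to a favourable range, never that they converge, and closing that gap rests on the delicate combinatorial arguments of \cite{Br80,Br83} rather than on anything soft. Since Theorem~\ref{bradley-thm} is invoked in this note only to recast the hypothesis of Theorem~\ref{basic-thm} in the familiar language of $\psi$-mixing, the honest course is simply to cite \cite{Br83}, referring the reader to the discussion in \cite[\S4.1]{Br05}.
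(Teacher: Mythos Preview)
Your proposal is correct and matches the paper's treatment: the paper does not prove Theorem~\ref{bradley-thm} but cites it from~\cite{Br83}, with a footnote explaining exactly the route you describe\,---\,that $\psi'_{\P}(\ell')>0$ forces $\phi_{\P}$ away from~$1$, whence the zero--one law of~\cite{Br80} yields $\phi$-mixing and hence ergodic-theoretic mixing, at which point Bradley's original Theorem~1 applies. One minor simplification: your submultiplicativity step is unnecessary, since the monotonicity of $\psi^*_{\P}$ and $\psi'_{\P}$ already gives $\psi^*_{\P}(\ell)\le\psi^*_{\P}(\ell^*)$ and $\psi'_{\P}(\ell)\ge\psi'_{\P}(\ell')$ for all $\ell$ large enough.
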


As discussed in the next section, the $\psi$-mixing condition is commonly used for the waiting-time characterization of entropy and cross entropy.
The requirement that~$\psi^*_\P(0) < \infty$ yields the following upper-decoupling property: there exists a constant~$C$ such that
	\begin{equation}
		\label{basic-dec-12}
 		\P_{n+m} (a b) \leq C \P_n(a) \P_m(b)
	\end{equation}
	for all $n,m \in \nn$, $a \in \cA^n$, and $b \in \cA^m$.
	Since  $\psi^*_\P(0) = \psi^*_{\widehat{\P}}(0)$, the bound~\eqref{basic-dec-12} also holds for~$\widehat \P$.
	The upper-decoupling property of $\widehat \P$ and Fekete's lemma for subadditive sequences give that the limit
	\[\Sc(\P|\widehat \P) := -\lim_{n\to\infty} \frac 1n \sum_{a \in \cA^n} \P_n(a) \log \widehat \P_n(a)\]
	exists; see also Remark~\ref{remark-ac-1}. In particular, the (possibly infinite) entropy production $\operatorname{ep}(\P)$ appearing in Theorem~\ref{basic-thm} is well defined.
	Furthermore, Kingman's subadditive ergodic theorem gives that
	\begin{equation}
	\label{ajde-key}
		\lim_{n\rightarrow \infty}-\frac{1}{n}\log \widehat \P_n(x_1^n)=\Sc(\P|\widehat \P)
	\end{equation}
	for $\P$-almost all~$x$. Such subadditivity arguments, largely absent in the information-theoretic literature, will also play important role in \cite{CDEJRa}.

	The results involving return times are often reformulated in terms of the so-called~``{match lengths}''. The proof of the following corollary follows a standard strategy based on the observation that  ${L}_m(x) \leq n$ if and only if ${R}_n(x) > m$ and that
	$\widehat{L}_m(x) \leq n$ if and only if $\widehat{R}_n(x) > m$.

	\begin{corollary}
	\label{cor:ep-from-matches}
		If $\P\in \cP_{\textnormal{inv}}$ is $\psi$-mixing with $\psi^*_\P(0) < \infty$, then the \emph{match lengths}
		\[
			{L}_m(x) := \sup\{n\in\nn : x_{n+r}^{n+r+n} = {x_1^n} \text{ for some } 1 \leq r \leq m \}
		\]
		and
		\[
			\widehat{L}_m(x) := \sup\{n\in\nn : x_{n+r}^{n+r+n} = \widehat{x_1^n} \text{ for some } 1 \leq r \leq m  \}
		\]
		satisfy
		\begin{equation}
			\lim_{m\to\infty} \left(\frac{\log m}{\widehat{L}_m(x)} - \frac{\log m}{{L}_m(x)}\right) = \operatorname{ep}(\P)
		\end{equation}
		for $\P$-almost all~$x \in\Omega$.
	\end{corollary}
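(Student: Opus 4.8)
The plan is to derive Corollary~\ref{cor:ep-from-matches} from Theorem~\ref{basic-thm} via the elementary duality between recurrence times and match lengths already flagged in the statement: for every $x$, $m$ and $n$ one has ${L}_m(x) \geq n \iff {R}_n(x) \leq m$, and similarly $\widehat{L}_m(x) \geq n \iff \widehat{R}_n(x) \leq m$. These equivalences let one pass between the asymptotics $\tfrac1n\log {R}_n(x) \to \Sc(\P|\P)=s(\P)$ along $\P$-a.e.\ $x$ (the Ornstein--Weiss / Wyner--Ziv return-time theorem, valid here since $\psi$-mixing with $\psi^*_\P(0)<\infty$ certainly implies ergodicity and the relevant decoupling) and the companion statement $\tfrac1n\log \widehat{R}_n(x) \to \Sc(\P|\widehat\P)$, which is the $\widehat\P$-waiting-time asymptotics under $\P$; together these two give Theorem~\ref{basic-thm}. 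So first I would isolate, from the proof of Theorem~\ref{basic-thm} in Section~\ref{sec-proof-basic}, the two \emph{separate} almost-sure limits
\[
\lim_{n\to\infty}\frac1n\log {R}_n(x) = s(\P),\qquad \lim_{n\to\infty}\frac1n\log \widehat{R}_n(x) = \Sc(\P|\widehat\P),
\]
valid for $\P$-a.e.\ $x$ (with the usual convention that the second reads $+\infty$ on the event where $\P_n\not\ll\widehat\P_n$ eventually, consistently with Remark~\ref{pre-remark-ac-1}); Theorem~\ref{basic-thm} is their difference.

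Next I would convert each of these into a statement about match lengths. Fix $x$ in the full-measure set where both limits hold and where, in addition, ${R}_n(x)<\infty$ for all $n$ (an a.s.\ event by Poincaré recurrence and ergodicity). Given the first limit, for any $\epsilon>0$ there is $N$ such that $\e^{n(s(\P)-\epsilon)} \leq {R}_n(x) \leq \e^{n(s(\P)+\epsilon)}$ for $n\geq N$. Feeding $m$ to the duality, $L_m(x)$ is the largest $n$ with ${R}_n(x)\leq m$; sandwiching $m$ between the two exponential bounds yields, for all large $m$,
\[
\frac{\log m}{s(\P)+\epsilon}\;\lesssim\; {L}_m(x)\;\lesssim\;\frac{\log m}{s(\P)-\epsilon},
\]
hence $\tfrac{\log m}{{L}_m(x)}\to s(\P)$ as $m\to\infty$. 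The identical argument with $\widehat{R}_n$ and $\widehat{L}_m$ gives $\tfrac{\log m}{\widehat{L}_m(x)}\to \Sc(\P|\widehat\P)$. Subtracting,
\[
\lim_{m\to\infty}\Bigl(\frac{\log m}{\widehat{L}_m(x)}-\frac{\log m}{{L}_m(x)}\Bigr) = \Sc(\P|\widehat\P)-s(\P) = \Sr(\P|\widehat\P) = \operatorname{ep}(\P),
\]
which is the claim. In the degenerate case $\operatorname{ep}(\P)=\infty$, the same sandwiching shows $\widehat{L}_m(x)=o(\log m)$ while ${L}_m(x)\asymp \log m / s(\P)$ (or $s(\P)=0$ too, handled separately), so the difference still diverges to $+\infty$.

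The step I expect to require the most care is the uniformity in $n$ when inverting the exponential asymptotics: the bounds ${R}_n(x)\in[\e^{n(s-\epsilon)},\e^{n(s+\epsilon)}]$ only hold for $n\geq N(x,\epsilon)$, so one must check that the finitely many small-$n$ exceptions cannot spoil the definition of $L_m(x)$ as $m\to\infty$ — but since $L_m(x)$ is nondecreasing in $m$ and tends to $\infty$ (again because each ${R}_n(x)<\infty$), for $m$ large the relevant $n$'s are all $\geq N$, and the small-$n$ terms are irrelevant. A second minor point is the edge behaviour when $s(\P)=0$: then $\tfrac{\log m}{L_m(x)}\to 0$ must be argued from ${R}_n(x)=\e^{o(n)}$ directly, and if moreover $\Sc(\P|\widehat\P)>0$ one still gets $\operatorname{ep}(\P)=+\infty$ on the left-hand side, consistent with the right-hand side. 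Everything else is the routine ``take logs, divide, send the parameter to infinity'' bookkeeping.
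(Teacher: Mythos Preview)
Your approach is correct and matches the paper's: the paper only records the duality ${L}_m(x)\leq n \iff {R}_n(x)>m$ (and its hatted analogue) and calls the remaining inversion argument ``standard,'' which is exactly the sandwiching you spell out. One small slip in your edge-case discussion: if $s(\P)=0$ and $\Sc(\P|\widehat\P)=c\in(0,\infty)$, then $\operatorname{ep}(\P)=c$, not $+\infty$; your argument still gives $\tfrac{\log m}{\widehat L_m(x)}-\tfrac{\log m}{L_m(x)}\to c-0=c$, so the conclusion is unaffected.
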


%%%%%%%%%%%%%%%%%%%%%
\section{Recurrence and waiting times}
\label{sec-rw-int}
%%%%%%%%%%%%%%%%%%%%%%

The recurrence time functions  $R_n$ are extensively studied in information theory. We call the \emph{Wyner--Ziv--Ornstein--Weiss theorem} the $\P$-almost sure convergence
\begin{equation}
\label{eq:WZOW}
	\frac{\log R_n}{n} \to h_\P,
\end{equation}
where $h_\P : \Omega \to [0,\infty]$ is the entropy function of the Shannon--McMillan--Breiman theorem. If $\P$ is ergodic, then $h_\P$ is $\P$-almost surely equal to the specific entropy $s(\P)$. The terminology reflects the contributions of Wyner and Ziv~\cite{WZ89} and of Ornstein and Weiss~\cite{OW93}. A particularly elegant proof of~\eqref{eq:WZOW} which significantly influenced our work was given by Kontoyiannis~\cite{Ko98,Ko2}.

Given the basic Wyner--Ziv--Ornstein--Weiss theorem, the proof of
\eqref{ajde} reduces to showing that
\begin{equation}\label{ajde-ver-tr}
\frac{\log \widehat{R}_n}{n} \to \Sc(\P|\widehat \P)
\end{equation}
in the $\P$-almost sure sense.
In turn, the proof of \eqref{ajde-ver-tr} makes use of another
important family of functions  in information theory, the \emph{waiting-time functions}, whose domain consists of pairs~$(x,y)$ of elements of~$\Omega$.
\begin{definition}
	For $n \in \nn$ and $(x, y)\in \Omega \times \Omega$,
	\[
	W_n(x,y) := \inf\{k  \geq 1 : y_{k}^{k+n-1} = x_1^n \}.
	\]
\end{definition}
In other words, $W_n(x,y)$ is the first time the prefix $x_1^n$ of $x$ appears in~$y$. Note that $R_n(x)=W_n(x,\sigma^{n}(x))$.  Just like the recurrence time functions, the waiting-time functions take values in
$\nn \cup\{\infty\}$.

\begin{example}{\em
	Consider~$\cA = \{0,1\}$ and the sequences
	\begin{align*}
		x &= 01001101010011010011101001001010\dotsc, \\
		y &= 11010100010010100110101001001010\dotsc
	\end{align*}
	With $n=4$, computing $W_4(x,y)$ amounts to finding the first occurrence of the prefix $x_1^4 = 0100$ in~$y$; here $W_4(x,y) = 5$ as seen in
	\begin{align*}
		x &= \underline{0100}1101010011010011101001001010\dotsc, \\
		y &= \underbrace{1101\underline{0}}_{5}\!\underline{100}010010100110101001001010\dotsc
	\end{align*}
	}
\end{example}

For our purposes, the relevance of $W_n$ arises through the $(\P\times\Q)$-almost sure convergence
\begin{equation}
	\frac{\log W_n}{n} \to \Sc(\P|\Q),
	\label{basic-cross}
\end{equation}
which is known to hold under  suitable mixing assumptions. In this note, we will make use of the following basic result of Kontoyiannis \cite[\S{4}]{Ko98}; see also \cite[\S{4.2.2}]{Ko2}.
\begin{theorem}[Kontoyiannis, 1998]
\label{thm-mixing-int}
	Suppose that~$\P, \Q \in \cP_{\rm erg}$ with $\P_n \ll \Q_n$ for all~$n \in \nn$. If~$\Q$ is $\psi$-mixing with $\psi_\Q^\ast(0)<\infty$,
	then
	\[
  	\lim_{n\to\infty} \frac{\log W_n(x,y)}{n} = \Sc(\P|\Q)
	\]
	for $(\P\times\Q)$-almost all pairs $(x,y)$.
\end{theorem}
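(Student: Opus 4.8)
The plan is to establish the two matching almost-sure bounds $\limsup_n\frac1n\log W_n(x,y)\le\Sc(\P|\Q)$ and $\liminf_n\frac1n\log W_n(x,y)\ge\Sc(\P|\Q)$ for $(\P\times\Q)$-almost all $(x,y)$. Both rest on the auxiliary convergence $-\tfrac1n\log\Q_n(x_1^n)\to\Sc(\P|\Q)$ for $\P$-almost all $x$, which I would obtain exactly as in the argument around \eqref{ajde-key}: the hypothesis $\psi^*_\Q(0)<\infty$ yields the upper-decoupling bound $\Q_{n+m}(ab)\le\psi^*_\Q(0)\,\Q_n(a)\Q_m(b)$, so $n\mapsto\log\psi^*_\Q(0)-\log\Q_n(x_1^n)$ is superadditive along the shift; since $\E_\P[-\log\Q_1(x_1)]<\infty$ (because $\P_1\ll\Q_1$ and $\cA$ is finite) and $\P$ is ergodic, Kingman's subadditive ergodic theorem identifies the (possibly infinite) limit as $\lim_n\tfrac1n\E_\P[-\log\Q_n(x_1^n)]=\Sc(\P|\Q)$, and if this limit is $+\infty$ the theorem will follow from the lower bound alone. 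I also note that $\P_n\ll\Q_n$ gives $x_1^n\in\supp\Q_n$ for every $n$ for $\P$-almost all $x$, so $W_n(x,y)<\infty$ for $\Q$-almost all $y$ by ergodicity of $\Q$.

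For the lower bound, fix $\epsilon>0$. Stationarity of $\Q$ and a union bound give, for every $x$ and $m\ge1$,
\[
\Q\bigl(\{y:W_n(x,y)\le m\}\bigr)\le\sum_{k=1}^{m}\Q\bigl(\{y:y_k^{k+n-1}=x_1^n\}\bigr)=m\,\Q_n(x_1^n).
\]
Choosing $m=\lfloor\Q_n(x_1^n)^{-1}\e^{-\epsilon n}\rfloor$ makes this $\le\e^{-\epsilon n}$; integrating over $x$ and invoking the Borel--Cantelli lemma shows that, for $(\P\times\Q)$-almost all $(x,y)$, eventually $W_n(x,y)\ge\tfrac12\Q_n(x_1^n)^{-1}\e^{-\epsilon n}$, whence $\liminf_n\tfrac1n\log W_n(x,y)\ge\lim_n\bigl(-\tfrac1n\log\Q_n(x_1^n)\bigr)-\epsilon=\Sc(\P|\Q)-\epsilon$. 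Letting $\epsilon=1/j$ run over $j\in\nn$ yields the lower bound.

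The upper bound is the crux, and it is here that $\psi$-mixing of $\Q$ (beyond $\psi^*_\Q(0)<\infty$) is used. Since $\Q$ is $\psi$-mixing, fix once and for all an $\ell\in\nn$ with $\psi'_\Q(\ell)\ge\tfrac12$. For $a=x_1^n\in\supp\Q_n$, consider in $y$ the pairwise disjoint length-$n$ windows starting at positions $i(n+\ell)+1$ for $i=0,1,\dots,N-1$, and let $F_i$ be the event that $y$ agrees with $a$ on window $i$. The event $\bigcap_{i<j}F_i^c$ is a finite union of cylinders over the first $L_j:=(j-1)(n+\ell)+n$ coordinates, while $F_j=\sigma^{-(L_j+\ell)}[a]$; the lower $\psi$-mixing bound, in the cylinder-set form noted in the footnote after Theorem~\ref{bradley-thm}, gives $\Q\bigl(F_j\cap\bigcap_{i<j}F_i^c\bigr)\ge\psi'_\Q(\ell)\,\Q_n(a)\,\Q\bigl(\bigcap_{i<j}F_i^c\bigr)$, hence $\Q\bigl(\bigcap_{i\le j}F_i^c\bigr)\le\bigl(1-\psi'_\Q(\ell)\Q_n(a)\bigr)\,\Q\bigl(\bigcap_{i<j}F_i^c\bigr)$. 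Iterating and using $\tfrac12\le\psi'_\Q(\ell)\le1$ yields $\Q\bigl(\bigcap_{i<N}F_i^c\bigr)\le\exp\bigl(-\tfrac12 N\Q_n(a)\bigr)\le\e^{-n}$ as soon as $N\ge2n\,\Q_n(a)^{-1}$. Since $W_n(x,y)\le N(n+\ell)$ off the event $\bigcap_{i<N}F_i^c$, taking $N=N_n(x):=\lceil2n\,\Q_n(x_1^n)^{-1}\rceil$ and integrating over $x$ gives $(\P\times\Q)\bigl(\{W_n(x,y)>N_n(x)(n+\ell)\}\bigr)\le\e^{-n}$, so Borel--Cantelli forces $W_n(x,y)\le N_n(x)(n+\ell)=O(n^2)\,\Q_n(x_1^n)^{-1}$ eventually, and therefore $\limsup_n\tfrac1n\log W_n(x,y)\le\lim_n\bigl(-\tfrac1n\log\Q_n(x_1^n)\bigr)=\Sc(\P|\Q)$. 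The main obstacle is precisely this decay estimate: one has to insert a fixed mixing gap $\ell$ between consecutive windows so that the conditional probability of seeing $a$ in the next window remains comparable to its marginal $\Q_n(a)$ uniformly in $n$, in the word $a$, and in the conditioning event, which is exactly what the uniform lower $\psi$-mixing coefficient provides and what weaker mixing notions would not.
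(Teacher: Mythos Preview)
Your proof is correct and follows essentially the same route as the paper's discussion (see Remark~\ref{remark-wtre}): the lower bound via the trivial union bound $\Q(\{W_n\le m\})\le m\,\Q_n(x_1^n)$, the upper bound via a product estimate over $(n+\ell)$-spaced windows using the lower $\psi$-mixing coefficient $\psi'_\Q(\ell)$, and the identification of the limit via Kingman applied to the subadditive sequence coming from $\psi^*_\Q(0)<\infty$. The only differences are cosmetic choices of parameters\,---\,you fix $\ell$ with $\psi'_\Q(\ell)\ge\tfrac12$ and take $N\sim 2n\,\Q_n(a)^{-1}$ to get $\Exp{-n}$ decay, whereas the paper works with any $\psi'_\Q(\ell)>0$ and thresholds at $\Exp{n\epsilon}\Q_n(a)^{-1}$ to get $(n+\ell)\Exp{-n\epsilon}$ decay\,---\,and these do not affect the argument.
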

In \cite[\S{4}]{Ko98} this result is given under the  additional assumption that~$\Q$ is  a Markov measure. This assumption, however, is used there only to ensure that the limit
\begin{equation}
\Sc(\P|\Q) = -\lim_{n\to\infty} \frac 1n \sum_{a \in \cA^n} \P_n(a) \log  \Q_n(a)
\label{ajde-cb}
\end{equation}
exists and that
	\begin{equation}
	\label{ajde-key-key}
		\lim_{n\rightarrow \infty}-\frac{1}{n}\log \Q_n(x_1^n)=\Sc(\P|\Q)
	\end{equation}
for $\Q$-almost all~$x$. Hence, it is not needed under the assumption $\psi_\Q^\ast(0)<\infty$, in which case~\eqref{ajde-cb} follows from Fekete's lemma  and~\eqref{ajde-key-key} follows from Kingman's subadditive ergodic theorem.
In a similar spirit, the assumption that $\P_n \ll \Q_n$ for all~$n \in \nn$ can be dropped; see Remark~\ref{remark-ac}. The proof of Theorem~\ref{thm-mixing-int} is discussed in Remark~\ref{remark-wtre}.
The assumptions of Theorem~\ref{thm-mixing-int} can be considerably relaxed; see Section~\ref{sec-general} and~\cite[\S{3}]{CDEJRa}.

%%%%%%%%%
\section{Proof of Theorem~\ref{basic-thm}}
\label{sec-proof-basic}
%%%%%%%%

We split the proof into three steps, assuming in accordance with Remark~\ref{pre-remark-ac-1} that $\P_n \ll \widehat \P_n$ for all $n\in \nn$; see Remark~\ref{remark-ac-1}.

\begin{description}
	\item[Step 1: Reduction.]  By the Wyner--Ziv--Ornstein--Weiss theorem, proving Theorem~\ref{basic-thm} reduces to showing that
	\[
	\lim_{n\rightarrow \infty}\frac{1}{n}\log \widehat R_n(x)=\Sc(\P|\widehat \P)
	\]
	for $\P$-almost all $x$.
	{In view of~\eqref{ajde-key}, it suffices to prove that}
	\begin{equation}\label{est-1-ajde}
	\liminf_{n\rightarrow \infty} \left(\frac{1}{n}\log \widehat R_n(x) +\frac{1}{n}\log \widehat \P_n(x_1^n)\right)\geq 0,
	\end{equation}
	\begin{equation}\label{est-2-ajde}
	\limsup_{n\rightarrow \infty} \left(\frac{1}{n}\log \widehat R_n(x) +\frac{1}{n}\log \widehat \P_n(x_1^n)\right)\leq 0,
	\end{equation}
	for $\P$-almost all $x$. For $\epsilon >0$, we set
	\begin{equation}
		B_{n, \epsilon}:=\left\{ x\,:\, \widehat R_n(x)\leq \e^{-\log \widehat \P_n(x_1^n)-n\epsilon}\right\}
		\quad
		\text{ and }
		\quad
		E_{n, \epsilon}:=\left\{ x\,:\, \widehat R_n(x)\geq  \e^{-\log \widehat \P_n(x_1^n)+n\epsilon}\right\}.
	\label{thesame}
	\end{equation}
	By the Borel--Cantelli lemma, \eqref{est-1-ajde} and~\eqref{est-2-ajde} follow if, for every $\epsilon>0$,
	\begin{equation}
	\sum_{n=1}^\infty\P(B_{n, \epsilon})<\infty\qquad\hbox{and}\qquad \sum_{n=1}^\infty \P(E_{n, \epsilon})<\infty.
	\label{ajde-ajde}
	\end{equation}
	The next  two steps are devoted to the proof of \eqref{ajde-ajde}.

	\item[Step 2: The first estimate.]  Let $\epsilon >0$ be arbitrary. Note that
	  \begin{align*}
	    \P( B_{n,\epsilon} ) = \sum_{a \in \supp\P_n} \sum_{j=1}^{\lfloor\Exp{- \log \widehat{\P}_n(a) - n \epsilon}\rfloor}\P\left(\left\{ x :  \widehat {R}_n(x) = j, x_1^n=a \right\}\right),
	  \end{align*}
	where $\supp \P_n$ is the set of all~$a \in \cA^n$ such that $\P_n(a) > 0$.	 Since $\widehat{R}_n(x) = j$ and $x_1^n = a$ imply that $x_1^{2n+j-1}$ is of the form $a\zeta\widehat{a}$ for some $\zeta\in\cA^{j-1}$, we can estimate
	 \begin{equation}
	  \begin{split}
	      \P\left(\left\{ x :  \widehat {R}_n(x) = j, x_1^n=a \right\}\right)
	        &\leq \sum_{\zeta \in \cA^{j-1}} \P_{n+j-1+n} (a \zeta\, \widehat{a}) \\
	        &\leq  C^2\sum_{\zeta \in \cA^{j-1}} \P_{j-1} ( \zeta)\P_n(a)\P_n( \widehat{a})\\
	        &= C^2 \P_n(a)\widehat  \P_{n}(a),
	  \end{split}
	  	        \label{no-le}
	  \end{equation}
	  where we used twice the upper-decoupling property~\eqref{basic-dec-12} and consistency of the marginals. Hence,
	  \begin{align*}
	    \P( B_{n,\epsilon} )
	      &\leq C^2
	      \sum_{a \in \supp\P_n}  \sum_{j=1}^{\lfloor\Exp{- \log \widehat{\P}_n(a) - n \epsilon}\rfloor} \P_n(a) \widehat \P_{n} ({a})  \\
	      &\leq C^2\sum_{a \in \supp\P_n}
	       \Exp{-\log\widehat{\P}_n(a)-n\epsilon} \P_n(a) \widehat \P_n(a)  \\
	      &=C^2\e^{-n\epsilon}.
	  \end{align*}
		The last upper bound on the right-hand side is clearly summable in~$n$, as desired for the first summability condition in~\eqref{ajde-ajde}.

	\item[Step 3: The second estimate.]	Let $\epsilon >0$ be arbitrary. We write
	  \begin{align*}
	    \P(E_{n,\epsilon})
	      &
	      = \sum_{a \in \supp \P_n } \P\left( \left\{ x : \widehat{R}_n(x)  \geq \Exp{-\log\widehat{\P}_n(a) + n\epsilon},\, x_1^{n}=a \right\}\right).
	  \end{align*}
	   Let~$n$ be large enough that $\Exp{\frac 12 n \epsilon}>2$. For  $a\in \supp \P_n$, choose~$m(a)$ so that
	    \begin{equation}
	      \Exp{-\log\widehat{\P}_n(a) + \frac 12 n\epsilon} \leq m(a) - n<\Exp{-\log\widehat{\P}_n(a) +  n\epsilon}.
	  \label{choo-m}
	    \end{equation}
	   Omitting the dependence of $m$ on $a$, we write
	   \begin{equation}
	    \begin{split}
	      \P(E_{n,\epsilon}) &\leq \sum_{a\in \supp \P_n}\sum_{b \in \cA^m} \P\left(\left\{ x   : \widehat {R}_n(x)  > m-n \textnormal{ and } x_{1}^{n+m}=ab \right\}\right) \\
	      &
	      =\sum_{a\in \supp \P_n} \sum_{b \in \cA^m} \P\left(\left\{ x : x_1^{n+m}=ab \textnormal{ and }  b_k^{k+n-1}\not=\widehat a \textnormal{ for } 1\leq k \leq m-n\right\}\right) \\
	      &
	      \leq  C\sum_{a\in \supp \P_n} \sum_{b \in \cA^m} \P_n\times \P_m\left(\left\{ (a, b) : b_k^{k+n-1}\not=\widehat a \textnormal{ for } 1\leq k \leq m-n \right\}\right),
		\end{split}
		\label{mar-late}
		\end{equation}
		
		where the inequality follows from the upper-decoupling property \eqref{basic-dec-12}. Now, using twice that the operation~$\widehat{\,\cdot\,}$ is an involution, and using the definition of~$\widehat{\P}$ in terms of this involution, we write
		\begin{align*}
			&\sum_{a\in \supp \P_n} \sum_{b \in \cA^m} \P_n\times \P_m\left(\left\{ (a, b) : b_k^{k+n-1}\not= \widehat a \textnormal{ for } 1\leq k \leq m-n \right\}\right) \\
				&\qquad\qquad = \sum_{a\in \supp \P_n} \sum_{b \in \cA^m} \P_n\times \P_m\left(\left\{ (a, {b}) : \widehat{b_{k}^{{k}+n-1}} \not = a \textnormal{ for } 1\leq k \leq m-n \right\}\right) \\
				&\qquad\qquad = \sum_{a\in \supp \P_n} \sum_{b \in \cA^m} \P_n\times \P_m\left(\left\{ (a, \widehat{b}) : {b}_{k}^{{k}+n-1}\not = a \textnormal{ for } 1\leq k \leq m-n \right\}\right) \\
			  &\qquad\qquad = \sum_{a\in \supp \P_n} \sum_{b \in \cA^m} \P_n\times \widehat \P_m\left(\left\{ (a, b) : b_{k}^{k+n-1}\not=a\textnormal{ for } 1 \leq k \leq m-n \right\}\right).
		\end{align*}
		The last probability on the right-hand side can be reinterpreted in terms of the waiting times of Section~\ref{sec-rw-int}:
		\begin{align*}
				&\sum_{a\in \supp \P_n} \sum_{b \in \cA^m} \P_n\times \widehat \P_m\left(\left\{ (a, b) : b_{k}^{k+n-1}\not=a \textnormal{ for } 1\leq k \leq m-n  \right\}\right) \\
	      &\qquad\qquad =\sum_{a\in \supp \P_n} \sum_{b \in \cA^m} \P \times \widehat \P\left(\left\{ (x, y) : W_n(x, y)>m-n,\, x_1^n=a, \, y_1^m=b
	      \right\}\right) \\[1mm]
	      & \qquad\qquad \leq \P \times \widehat \P\left( \left\{ (x, y) : W_n(x, y)> \Exp{-\log\widehat{\P}_n(x_1^n) + \frac 12 n\epsilon}\right\}\right).
	 \end{align*}
	 The proof of Theorem~\ref{thm-mixing-int} then gives
	\[
		\sum_{n=1}^\infty  \P \times \widehat \P\left(\left\{ (x, y) : W_n(x, y)> \Exp{-\log\widehat{\P}_n(x_1^n) + \frac 12 n\epsilon} \right\}\right)<\infty,
	\]
	and hence the second summability condition in~\eqref{ajde-ajde}; see Remark~\ref{remark-wt}.
\end{description}

%%%%%%%%%
\section{Generalizations}
\label{sec-general}
%%%%%%%%%%

A careful analysis of the proof of Theorem~\ref{basic-thm} in Section~\ref{sec-proof-basic} and of the accompanying Remarks~\ref{rem-ude} and~\ref{remark-wtre} in Section~\ref{sec-remarks} suggests a clear path to generalizations of Theorem~\ref{basic-thm} beyond the $\psi$-mixing case. The following theorem is our first result in this direction.
\begin{theorem}
\label{thm-abslate-1}
	Let $\P \in \cP_{\textnormal{erg}}$. Suppose that the following hypotheses hold with $o(n)$-sequences $(c_n)_{n\in\nn}$ and $(\tau_n)_{n\in\nn}$ of nonnegative integers:
	\begin{enumerate}
	\item[i.] the upper-decoupling inequalities
	\[
		\P([a] \cap \sigma^{-n-\tau_n}[b]) \leq \Exp{c_n} \P_n(a) \P_m(b)
	\]
	and
	\[
		\widehat{\P}([a] \cap \sigma^{-n-\tau_n}[b]) \leq \Exp{c_n} \widehat{\P}_n(a) \widehat{\P}_m(b)
	\]
 	hold for all~$a \in \cA^n$, $n \in \nn$, $b \in \cA^m$, $m\in\nn$;
	\item[ii.] for all~$\epsilon > 0$, the Kontoyiannis-type estimate
 	\begin{equation*}
		\sum_{n\in\nn}
 			\Exp{c_n} (\P\times\widehat{\P})\left( \left\{ (x,y) :  W_n(x,y)\widehat{\P}_n(x_1^n) > \Exp{n\epsilon}\right\} \right) < \infty
 	\end{equation*}
 	holds.
	\end{enumerate}
	Then,
 \begin{align*}
	 \lim_{n\to\infty} \frac{1}{n} \log \frac{\widehat{R}_n(x)}{R_n(x)}
			= \operatorname{ep}(\P)
 \end{align*}
 for $\P$-almost all~$x$.
\end{theorem}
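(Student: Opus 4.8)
The plan is to repeat the three-step scheme of Section~\ref{sec-proof-basic} with the constant upper-decoupling~\eqref{basic-dec-12} replaced by hypothesis~(i) and the proof of Theorem~\ref{thm-mixing-int} replaced by hypothesis~(ii); Remarks~\ref{rem-ude} and~\ref{remark-wtre} isolate the technical inputs. As in Section~\ref{sec-proof-basic}, I assume $\P_n\ll\widehat\P_n$ for all $n$ (the complementary case being as in Remark~\ref{remark-ac-1}) and, without loss of generality, that $(\tau_n)$ is nondecreasing, the running maximum of an $o(n)$ sequence of nonnegative integers being again $o(n)$. Two preliminaries. First, the argument establishing~\eqref{ajde-key}\,---\,iterate the $\widehat\P$-decoupling in~(i), then apply Fekete's lemma and Kingman's ergodic theorem\,---\,goes through with the constant replaced by $\Exp{c_n}$ and an $o(n)$ shift $\tau_n$ in the indices, showing that $\Sc(\P|\widehat\P)$ exists (possibly $+\infty$) and that $-\tfrac1n\log\widehat\P_n(x_1^n)\to\Sc(\P|\widehat\P)$ $\P$-a.s.; see Remark~\ref{rem-ude}. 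Second, $\P$ ergodic forces $\widehat\P$ ergodic, so the Wyner--Ziv--Ornstein--Weiss theorem gives $\tfrac1n\log R_n(x)\to s(\P)$ $\P$-a.s.; it therefore suffices to show $\tfrac1n\log\widehat R_n(x)\to\Sc(\P|\widehat\P)$ $\P$-a.s., which by the Borel--Cantelli lemma reduces to $\sum_n\P(B_{n,\epsilon})<\infty$ and $\sum_n\P(E_{n,\epsilon})<\infty$ for every $\epsilon>0$, with $B_{n,\epsilon}$ and $E_{n,\epsilon}$ as in~\eqref{thesame}. One elementary observation is used throughout: since the decoupling in~(i) holds for words $b$ of all lengths, it automatically upgrades to any gap $\ell\ge\tau_n$ (absorb the surplus $\ell-\tau_n$ letters into~$b$) and to its set-valued form (sum over $b$ ranging in a subset of $\cA^m$).

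For $\sum_n\P(B_{n,\epsilon})<\infty$ I follow the first estimate of Section~\ref{sec-proof-basic}, writing $\P(B_{n,\epsilon})\le\sum_{a\in\supp\P_n}\sum_{j=1}^{N_a}\P(\{x:x_1^n=a,\ x_{n+j}^{n+j+n-1}=\widehat a\})$ with $N_a:=\lfloor\Exp{-\log\widehat\P_n(a)-n\epsilon}\rfloor$, and splitting the $j$-sum at $\tau_n$. For $j>\tau_n$ the gap $j-1$ is $\ge\tau_n$, so the chain~\eqref{no-le} applies with $C^2$ replaced by $\Exp{c_n}$ and gives summand $\le\Exp{c_n}\P_n(a)\widehat\P_n(a)$; summing over $\tau_n<j\le N_a$ and over $a$ yields $\Exp{c_n-n\epsilon}$. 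For $1\le j\le\tau_n$ (at most $\tau_n$ terms) I replace the prefix $a$ by its truncation $a_1^{n-\tau_n}$: the gap to $x_{n+j}^{n+j+n-1}$ becomes $\tau_n+j-1\ge\tau_n\ge\tau_{n-\tau_n}$, so~(i) gives summand $\le\Exp{c_{n-\tau_n}}\P_{n-\tau_n}(a_1^{n-\tau_n})\widehat\P_n(a)$; summing over $j\le\min(\tau_n,N_a)$ using $N_a\widehat\P_n(a)\le\Exp{-n\epsilon}$, then over $a$ using $\sum_{a\in\cA^n}\P_{n-\tau_n}(a_1^{n-\tau_n})=|\cA|^{\tau_n}$, yields $\Exp{c_{n-\tau_n}+\tau_n\log|\cA|-n\epsilon}$. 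Both contributions are $O(\Exp{-n\epsilon/2})$ since $c_n,\tau_n=o(n)$.

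For $\sum_n\P(E_{n,\epsilon})<\infty$ I follow the second estimate: for $a\in\supp\P_n$ choose $m(a)$ with $\Exp{-\log\widehat\P_n(a)+\frac12 n\epsilon}\le m(a)-n<\Exp{-\log\widehat\P_n(a)+n\epsilon}$, so that $E_{n,\epsilon}\cap[a]\subseteq\{x:x_1^n=a,\ \widehat R_n(x)>m(a)-n\}$. Discarding the $\tau_n$ earliest of the $m(a)-n\gg\tau_n$ candidate reoccurrence positions enlarges this set to $[a]\cap\sigma^{-n-\tau_n}[\cB_a]$, where $\cB_a:=\{y:y_k^{k+n-1}\ne\widehat a\text{ for }1\le k\le m(a)-n-\tau_n\}$, and the set-valued~(i) gives $\P(E_{n,\epsilon}\cap[a])\le\Exp{c_n}\P_n(a)\P(\cB_a)$. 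Summing over $a$ and running the finite-marginal involution computation of Section~\ref{sec-proof-basic} unchanged (the bijection $b\mapsto\widehat b$ carries $\P_m$ to $\widehat\P_m$ and turns the condition $b_k^{k+n-1}\ne\widehat a$ into $b_j^{j+n-1}\ne a$) rewrites $\sum_a\P_n(a)\P(\cB_a)$ as $(\P\times\widehat\P)(\{(x,y):W_n(x,y)>m(x_1^n)-n-\tau_n\})$; since $m(a)-n-\tau_n\ge\Exp{-\log\widehat\P_n(a)+\frac14 n\epsilon}$ for large~$n$, this is at most $(\P\times\widehat\P)(\{(x,y):W_n(x,y)\widehat\P_n(x_1^n)>\Exp{n\epsilon/4}\})$. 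Thus $\P(E_{n,\epsilon})\le\Exp{c_n}(\P\times\widehat\P)(\{(x,y):W_n(x,y)\widehat\P_n(x_1^n)>\Exp{n\epsilon/4}\})$, and summability in $n$ is precisely hypothesis~(ii) applied with $\epsilon/4$ in place of~$\epsilon$; see Remark~\ref{remark-wtre}.

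The one genuinely new difficulty, and the point I expect to be the main obstacle, is the passage from gap~$0$ to gap~$\tau_n$: in the $\psi$-mixing proof adjacent blocks decouple directly, whereas here every application of decoupling must swallow a $\tau_n$-sized boundary strip\,---\,absorbed into a neighbouring word in the first estimate, at the cost of the factor $|\cA|^{\tau_n}=\Exp{o(n)}$, or simply dropped in the second estimate\,---\,after which one must verify that each such $\Exp{o(n)}$ cost, together with the $\Exp{c_n}$ coming from~(i), is dominated by the exponential margins $\Exp{\pm n\epsilon}$ engineered into $B_{n,\epsilon}$ and $E_{n,\epsilon}$. The hypotheses $c_n=o(n)$ and $\tau_n=o(n)$ are exactly what makes this go through, and~(ii) is stated so as to already contain the residual $\Exp{c_n}$ left from the second estimate.
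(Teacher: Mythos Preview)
Your proof is correct and follows essentially the same three-step Borel--Cantelli scheme as the paper's argument in the Appendix. The only organizational difference is in Step~2: the paper redefines $B_{n,\epsilon}$ using $\widehat{\P}_{n-\tau_n}(x_1^{n-\tau_n})$ and absorbs $\tau_n$-boundary strips on both sides of the middle block (costing $\Exp{2c_{n-\tau_n}}$), whereas you keep the original $B_{n,\epsilon}$, split the $j$-sum at $\tau_n$, and apply decoupling once per term---your version is arguably tidier but the two are equivalent in spirit and in outcome.
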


Furthermore, Hypothesis~i can be relaxed to accommodate an additional error term in the spirit of Bryc and Dembo's work on large deviations~\cite{BD96}, but the validity of~\eqref{ajde-key} must then be postulated separately. Hypothesis~ii can also be relaxed in the same spirit. Pursuing these ideas  leads to the following result.

\begin{theorem}
\label{thm-abslate-2}
	Let $\P \in \cP_{\textnormal{inv}}$. Suppose that, for all~$K \in \nn$ large enough, there exist  $o(n)$-sequences $(c_{n, K})_{n\in\nn}$ and $(\tau_{n, K})_{n\in\nn}$ of nonnegative integers such that the following hypotheses hold:
	\begin{enumerate}
	\item[i'.] the upper-decoupling inequality
 		\[
 			\P([a] \cap \sigma^{-n-\tau_{n, K}}([B])) \leq \Exp{c_{n, K}} \P_n(a) \P_m(B) + \Exp{-Kn}
 		\]
 	holds for all~$a \in \cA$, $n \in \nn$, $B \subseteq \cA^m$, $m\in\nn$;
	\item[i''.] the limit
	\[
		h_{\widehat \P}(x) = \lim_{n\to\infty} \frac{-\log \widehat \P_n(x_1^n)}{n}
	\]
	exists for $\P$-almost all~$x$;
	\item[ii'.] for all~$\epsilon > 0$, the Kontoyiannis-type estimate
 	\begin{equation*}
		\sum_{n\in\nn}
 			\Exp{c_{n, K}} (\P\times\widehat{\P})\left(\left\{ (x,y) : W_n(x,y)\widehat{\P}_n(x_1^n) > \Exp{n\epsilon} \text{ and } \widehat{\P}_n(x_1^n) \geq \Exp{-nK} \right\}\right) < \infty
 	\end{equation*}
 	holds.
	\end{enumerate}
	Then,
	\begin{align*}
		\lim_{n\to\infty} \frac{1}{n} \log \widehat{R}_n(x)
				= h_{\widehat{\P}}(x) 
	\end{align*}
for $\P$-almost all~$x$.
\end{theorem}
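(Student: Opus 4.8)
The plan is to follow the three-step scheme of the proof of Theorem~\ref{basic-thm}, noting first that $R_n$ no longer enters (the conclusion concerns $\widehat{R}_n$ alone) and that each estimate must be adapted to absorb the gap $\tau_{n,K}$, the multiplicative error $\Exp{c_{n,K}}$, and the additive error $\Exp{-Kn}$. I would fix a large integer~$K$ and work on the event $\{h_{\widehat{\P}} < K\}$, on which hypothesis~i'' forces $\widehat{\P}_n(x_1^n) \geq \Exp{-Kn}$ for all large~$n$ --- precisely the regime in which the truncations in~i' and~ii' are harmless --- and let $K \to \infty$ at the end. As in Step~1 of Section~\ref{sec-proof-basic}, it then suffices, by the Borel--Cantelli lemma applied to the sets $B_{n,\epsilon}$, $E_{n,\epsilon}$ of~\eqref{thesame} and by hypothesis~i'', to prove $\sum_n \P(B_{n,\epsilon}) < \infty$ and $\sum_n \P\bigl(E_{n,\epsilon} \cap \{\widehat{\P}_n(x_1^n) \geq \Exp{-Kn}\}\bigr) < \infty$ for every $\epsilon > 0$: this gives $\frac1n\log\widehat{R}_n(x) + \frac1n\log\widehat{\P}_n(x_1^n) \to 0$, hence $\frac1n\log\widehat{R}_n(x) \to h_{\widehat{\P}}(x)$, on $\{h_{\widehat{\P}} < \infty\}$; the set $\{h_{\widehat{\P}} = \infty\}$ (including $\widehat{\P}_n(x_1^n) = 0$, where $\widehat{R}_n(x) = \infty$ almost surely because $\P$-null words almost surely do not occur, cf.\ Remark~\ref{remark-ac-1}) is covered by the lower estimate, whose right-hand side then diverges.

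The upper estimate ($E_{n,\epsilon}$) follows Step~3. After choosing $m(a)$ as in~\eqref{choo-m}, I would decouple $[a]$ from $\{\widehat{R}_n > m - n\}$ across a buffer of length $\tau_{n,K}$ rather than length~$0$: discarding the constraint on the first $\tau_{n,K}$ candidate return positions (which only enlarges the event) leaves a cylinder event on the coordinates past the buffer, to which hypothesis~i' applies, and since $\tau_{n,K} = o(n)$ shrinking the window by $\tau_{n,K}$ costs only a sub-exponential factor, leaving a threshold of the form $\Exp{-\log\widehat{\P}_n(x_1^n) + c\,n\epsilon}$ with $c > 0$ fixed. The involution manipulations of Step~3 rewrite the bound in terms of $W_n$ under $\P \times \widehat{\P}$, hypothesis~ii' controls the main term on $\{\widehat{\P}_n(x_1^n) \geq \Exp{-Kn}\}$, and the additive errors contribute $\sum_n |\cA|^n \Exp{-Kn} < \infty$ once $K > \log|\cA|$.

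For the lower estimate ($B_{n,\epsilon}$) I would split the event $\widehat{R}_n(x) \leq \widehat{\P}_n(x_1^n)^{-1}\Exp{-n\epsilon}$ according to whether the witnessing occurrence of $\widehat{x_1^n}$ starts past the buffer (at a position $n+k$ with $k > \tau_{n,K}$) or within it ($k \leq \tau_{n,K}$). The \emph{late} part is handled as in Step~2 using the single buffered decoupling of~i': the event that $x_1^n = a$ and $\widehat{a}$ occurs at some position in $[\,n + \tau_{n,K} + 1,\, n + \widehat{\P}_n(a)^{-1}\Exp{-n\epsilon} + n - 1\,]$ has the form $[a] \cap \sigma^{-n-\tau_{n,K}}[B]$ with $\P_m(B) \leq \widehat{\P}_n(a)^{-1}\Exp{-n\epsilon} \cdot \widehat{\P}_n(a) = \Exp{-n\epsilon}$ by $\sigma$-invariance, so it has $\P$-measure $\leq \Exp{c_{n,K} - n\epsilon}\P_n(a) + \Exp{-Kn}$, which is summable after summing over~$a$ and then~$n$. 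The \emph{early} part is the crux: the naive bound $\sum_{k=1}^{\tau_{n,K}} \P(x_1^n = a,\, x_{n+k}^{n+k+n-1} = \widehat{a})$ is not summable, since the gap $k-1$ is \emph{smaller} than the prescribed gap $\tau_{n,K}$, so~i' does not apply directly and a crude estimate loses the crucial factor $\widehat{\P}_n(a)$. The remedy is to localize onto words of small mass: with $\cA^n_\delta := \{a \in \cA^n : \widehat{\P}_n(a) \leq \Exp{-n\delta}\}$, hypothesis~i'' ensures $x_1^n \in \cA^n_\delta$ for all large~$n$ on $\{h_{\widehat{\P}} > 2\delta\}$, and for $a \in \cA^n_\delta$, $k \leq \tau_{n,K}$, one applies~i' to a first word of length $N = n - O(\tau_{n,K})$ whose prescribed gap $\tau_{N,K}$ places the start of $\widehat{a}$ at position $n+k$ (routine $o(n)$-bookkeeping), with the occurrence of $\widehat{a}$ sitting inside $B$ at a fixed offset; then $\P_m(B) = \widehat{\P}_n(a) \leq \Exp{-n\delta}$ while $\sum_{a \in \cA^n_\delta}\P_N(\overline{a}) \leq |\cA|^{n-N} = \Exp{o(n)}$, so the early part is $\leq \tau_{n,K}\bigl(\Exp{o(n) - n\delta} + |\cA|^n\Exp{-Kn}\bigr)$, summable for every fixed $\delta > 0$ and $K > \log|\cA|$. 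Letting $\delta = 1/j \downarrow 0$, and noting that on $\{h_{\widehat{\P}} = 0\}$ early returns cannot threaten $\liminf_n \frac1n\log\widehat{R}_n(x) \geq 0$, one obtains $\liminf_n \frac1n\log\widehat{R}_n(x) \geq h_{\widehat{\P}}(x)$ almost surely, which together with the upper estimate proves the theorem.

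The main obstacle is exactly this early-return term: decoupling is postulated only at the single gap $\tau_{n,K}$, so an occurrence of the reversed prefix within the first $\tau_{n,K}$ sites escapes a direct application of~i', and the best naive bound fails to be summable. Localizing to words with exponentially small $\widehat{\P}$-mass --- legitimate precisely on $\{h_{\widehat{\P}} > 0\}$, the only place such occurrences matter --- is the one genuinely new ingredient beyond a mechanical transcription of the proof of Theorem~\ref{basic-thm}; the ancillary bookkeeping that realizes an arbitrary gap $\leq \tau_{n,K}$ as the prescribed gap of a slightly shorter first word is routine because all the sequences in play are $o(n)$.
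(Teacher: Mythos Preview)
Your overall strategy is sound, and Step~3 matches the paper closely. In Step~2, however, you take a more convoluted route than necessary: the ``main obstacle'' you identify\,---\,an occurrence of $\widehat{a}$ at a position $n+k$ with $k \leq \tau_{n,K}$\,---\,is an artifact of your choice to decouple with $[a]$ left intact, and the paper sidesteps it entirely.

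The paper's Step~2 simply truncates the first cylinder: it replaces $[a]$ by $[a_1^{n-\tau_{n,K}}]$, carving the buffer out of the \emph{prefix} rather than out of the return window. The full union $\bigcup_{j=1}^{\lfloor \Exp{-\log\widehat{\P}_n(a)-n\epsilon}\rfloor}\{x : \widehat{R}_n(a\sigma^n(x)) = j\}$ is an event in $\sigma^{-n}\cF_{\text{fin}}$ with probability at most $\lfloor \Exp{-\log\widehat{\P}_n(a)-n\epsilon}\rfloor \cdot \widehat{\P}_n(a) \leq \Exp{-n\epsilon}$ by a single union bound, and one application of Hypothesis~i' then gives
\[
   \P(B_{n,\epsilon}) \leq \Exp{c_{n,K}+\tau_{n,K}\log|\cA|-n\epsilon} + |\cA|^n\Exp{-Kn},
\]
summable for $K > \log|\cA|$. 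No early/late split, no localization to $\cA^n_\delta$, no case analysis on $\{h_{\widehat{\P}}=0\}$, and no $\delta\downarrow 0$ exhaustion are needed. (The same $o(n)$-bookkeeping you describe\,---\,matching the first-word length to the prescribed gap\,---\,is implicit here too, so on that point you are not doing extra work.)

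Your approach is correct and the localization argument does go through, but it buys nothing: it solves a problem that the truncation trick prevents from arising in the first place. The lesson is that when decoupling is available only at the single gap $\tau_{n,K}$, the natural move is to absorb the buffer inside the first cylinder rather than inside the tail event.
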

\begin{remark} Combined with the Wyner--Ziv--Ornstein--Weiss theorem, this result yields that
\begin{align*}
		\lim_{n\to\infty} \frac{1}{n} \log \frac{\widehat{R}_n(x)}{R_n(x)}
				= h_{\widehat{\P}}(x) - h_\P(x)
	\end{align*}
for $\P$-almost all~$x$ under the same conditions. If in addition $\P\in {\cal P}_{\rm erg}$, then 
 \begin{align*}
	 \lim_{n\to\infty} \frac{1}{n} \log \frac{\widehat{R}_n(x)}{R_n(x)}
			= \operatorname{ep}(\P)
 \end{align*}
 for $\P$-almost all~$x$.
\end{remark}
\begin{remark} If Hypothesis~i' of Theorem \ref{thm-abslate-2} is dropped, the proof still gives that 
\[
\begin{split}
 \liminf_{n\to\infty} \frac{-\log \widehat \P_n(x_1^n)}{n}&\leq 
 \liminf_{n\to\infty} \frac{1}{n} \log \widehat{R}_n(x)\\[1mm]
 &\leq \limsup_{n\to\infty} \frac{1}{n} \log \widehat{R}_n(x)\leq \limsup_{n\to\infty} \frac{-\log \widehat \P_n(x_1^n)}{n}
 \end{split}
\]
 for $\P$-almost all~$x$.
\end{remark} 

The proofs of Theorems  \ref{thm-abslate-1} and \ref{thm-abslate-2} are sketched in the Appendix. The companion paper~\cite{CDEJRa} is devoted to the role of {decoupling inequalities} in establishing Hypothesis~ii or Hypotheses~i'' and~ii'\,---\,both of which are sufficient in order to adapt Kontoyiannis' proof of the convergence of waiting times.  We state here a special case of a result in this direction.

\begin{theorem} 
	Let $\P \in \cP_{\textnormal{inv}}$. Suppose that, for all~$K \in \nn$ large enough, there exist  $o(n)$-sequences $(c_{n, K})_{n\in\nn}$ and $(\tau_{n, K})_{n\in\nn}$ of nonnegative integers such that for each~$a \in \cA^n$ and $B \subseteq \cA^m$ there exists $\ell \leq \tau_{n, K}$ for which
	\[
		\P([a] \cap \sigma^{-n-\ell}([B])) \geq \Exp{-c_{n, K}}\P_n(a)\P_m(B) - \Exp{-Kn}.
	\]
	Then, Hypothesis~ii' of Theorem~\ref{thm-abslate-2} holds for all~$K \in \nn$. If, in addition, the sequences $(c_n)_{n\in\nn}$ and $(\tau_n)_{n\in\nn}$ can be chosen independently of~$K$, then Hypothesis~ii of Theorem~\ref{thm-abslate-1} holds.
\label{thm-abslate-3}
\end{theorem}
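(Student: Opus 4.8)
The plan is to read Hypotheses~ii$'$ and~ii as tail bounds on the waiting time $W_n$, and then to run a blocked version of Kontoyiannis' argument in which the lower-decoupling inequality of the hypothesis plays the role that the lower $\psi$-mixing coefficient $\psi'$ plays in \cite{Ko98,Ko2}: it is the mechanism forcing the prefix $a:=x_1^n$ of a $\P$-typical $x$ to reappear early inside a $\widehat\P$-typical $y$. Concretely, fix $x$ with $a:=x_1^n\in\supp\P_n$, put $\rho:=\widehat\P_n(a)$, and---for Hypothesis~ii$'$---assume $\rho\ge\Exp{-nK}$. Then $\{y:W_n(x,y)\widehat\P_n(x_1^n)>\Exp{n\epsilon}\}$ is exactly the event that $a$ does not begin at any of the first $t:=\lfloor\Exp{n\epsilon}/\rho\rfloor$ positions of $y$, and it suffices to bound its $\widehat\P$-probability, integrate against $\P$ in $x$, multiply by $\Exp{c_{n,K}}$, and sum over $n$.

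The core is a recursively built block configuration together with a one-step estimate. Let $W:=\tau_{n,K}+1$. Define a cylinder $B_1\subseteq\cA^{W+n-1}$ by ``$a$ does not begin inside $[1,W]$'' and set $u_1:=\widehat\P_{W+n-1}(B_1)$; given $B_{\kappa-1}\subseteq\cA^{m_{\kappa-1}}$ with $u_{\kappa-1}:=\widehat\P_{m_{\kappa-1}}(B_{\kappa-1})$, let $\ell_\kappa\le\tau_{n,K}$ be the shift the hypothesis supplies for the pair $(a,B_{\kappa-1})$ and put $B_\kappa:=\{\,b\in\cA^{n+\ell_\kappa+m_{\kappa-1}}: a\text{ does not begin inside }[1,W]\text{ in }b\text{ and }b_{n+\ell_\kappa+1}^{n+\ell_\kappa+m_{\kappa-1}}\in B_{\kappa-1}\,\}$, $u_\kappa:=\widehat\P_{m_\kappa}(B_\kappa)$. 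Since the part of the configuration past position $n$ is a translate of the configuration defining $u_{\kappa-1}$, $\sigma$-invariance gives $u_\kappa=u_{\kappa-1}-\widehat\P\bigl(\{a\text{ begins inside }[1,W]\}\cap\{y_{n+\ell_\kappa+1}^{n+\ell_\kappa+m_{\kappa-1}}\in B_{\kappa-1}\}\bigr)$; bounding $\{y_1^n=a\}\subseteq\{a\text{ begins inside }[1,W]\}$ from below and applying the lower-decoupling inequality (used here for $\widehat\P$) between the cylinder $[a]$ (length $n$, so $c_{n,K}$ and $\tau_{n,K}$ stay $o(n)$) and $B_{\kappa-1}$ yields
\[
  u_\kappa\ \le\ u_{\kappa-1}\bigl(1-\Exp{-c_{n,K}}\rho\bigr)+\Exp{-Kn},
\]
the additive term being absent in the setting of Theorem~\ref{thm-abslate-1}. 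Taking $k:=\lfloor t/(n+\tau_{n,K})\rfloor$, all $k$ blocks lie in $[1,t]$, so $\widehat\P(W_n(x,\cdot)>t)\le u_k$.

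Unrolling, $u_k\le\bigl(1-\Exp{-c_{n,K}}\rho\bigr)^{k}+\Exp{c_{n,K}-Kn}/\rho\le\exp\!\bigl(-\Exp{-c_{n,K}}\rho\,k\bigr)+\Exp{c_{n,K}-Kn}/\rho$. As $k\asymp\Exp{n\epsilon}/(\rho n)$ and $c_{n,K}=o(n)$, the first term is at most $\exp(-\Exp{n\epsilon/2}/n)$ for large $n$, uniformly in $x$, so its contribution converges even after the $\Exp{c_{n,K}}$ weighting and the sum over $n$. The additive term is the only delicate point: one runs the argument with the decoupling inequality for a larger parameter $K^\ast>K$ (available since the hypothesis is assumed for all large $K$), so $\Exp{-Kn}$ becomes $\Exp{-K^\ast n}$ and the sequences $c_{n,K^\ast},\tau_{n,K^\ast}$ serve as the sequences witnessing Hypothesis~ii$'$; using $\rho\ge\Exp{-nK}$ to estimate $\int\!\one\{\widehat\P_n(x_1^n)\ge\Exp{-nK}\}\,\widehat\P_n(x_1^n)^{-1}\,\d\P(x)=\sum_{a:\,\widehat\P_n(a)\ge\Exp{-nK}}\P_n(a)/\widehat\P_n(a)\le\Exp{nK}$, the additive term contributes at most $\sum_n\Exp{2c_{n,K^\ast}-(K^\ast-K)n}<\infty$. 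Integrating in $x$ and summing in $n$ gives Hypothesis~ii$'$; Hypothesis~ii follows identically with $(c_n),(\tau_n)$ in place of $(c_{n,K}),(\tau_{n,K})$ and no additive error to absorb---so there the $K$-independence of the sequences is exactly what makes the auxiliary parameter $K^\ast$ superfluous.

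The main obstacle---and the reason for the recursive, rather than globally fixed, block configuration---is to keep the cylinder on the ``near'' side of the decoupling inequality of bounded length $O(n)$, so that $c_{n,K}$ and $\tau_{n,K}$ remain $o(n)$, while still accommodating the word-dependent shift $\ell\le\tau_{n,K}$ without disturbing the identity $\widehat\P_{m_{\kappa-1}}(B_{\kappa-1})=u_{\kappa-1}$; attaching each new block at precisely the offset dictated by the inequality is what reconciles the two. (As with Hypothesis~i of Theorem~\ref{thm-abslate-1} one understands the decoupling hypothesis to hold for $\widehat\P$ as well; this is automatic in the decoupling frameworks of \cite{CDEJRa}.) Once the one-step estimate is in place the remainder is bookkeeping, and the computation makes transparent why Hypothesis~ii$'$ carries the restriction $\widehat\P_n(x_1^n)\ge\Exp{-nK}$ and Hypothesis~i$'$ the slack $\Exp{-Kn}$: they are matched exactly so that the accumulated additive errors stay summable against the $\Exp{c_{n,K}}$ weight.
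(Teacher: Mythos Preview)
The paper does not actually prove this theorem here---it is stated as a special case of the main results of the companion paper~\cite{CDEJRa}, with only the indication that the argument is an adaptation of Kontoyiannis' proof (Remark~\ref{remark-wtre}) in which lower decoupling replaces the lower $\psi$-mixing coefficient. Your proposal is exactly that adaptation, and the recursive block construction with step-dependent shifts $\ell_\kappa$, the one-step recursion $u_\kappa\le u_{\kappa-1}(1-\Exp{-c_{n,K}}\rho)+\Exp{-Kn}$, and the $K^\ast$-trick to absorb the additive error are all correct and carried out cleanly. For the $K$-independent case your remark that ``no additive error'' remains is also right: since the multiplicative part of the bound is $K$-independent, letting $K\to\infty$ for fixed $n$ eliminates the term $\Exp{c_n-Kn}/\rho$ entirely.

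There is, however, one genuine issue. You apply the lower-decoupling inequality to~$\widehat\P$, but the hypothesis of the theorem gives it only for~$\P$. Your parenthetical that ``as with Hypothesis~i of Theorem~\ref{thm-abslate-1} one understands the decoupling hypothesis to hold for~$\widehat\P$ as well'' is not justified: Hypothesis~i of Theorem~\ref{thm-abslate-1} lists the $\P$- and $\widehat\P$-inequalities \emph{separately} precisely because one does not imply the other---under reversal, $\widehat\P([a]\cap\sigma^{-n-\ell}[B])=\P([\widehat B]\cap\sigma^{-m-\ell}[\widehat a])$, so the roles of the single word and the set are swapped and the constants would be indexed by~$m$ rather than~$n$. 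The fix is the reversal identity already used in Step~3 of Section~\ref{sec-proof-basic} and in the Appendix:
\[
  \widehat\P\bigl(\{y: y_k^{k+n-1}\neq a,\ 1\le k\le t\}\bigr)
  \;=\;
  \P\bigl(\{y: y_k^{k+n-1}\neq \widehat a,\ 1\le k\le t\}\bigr),
\]
together with $\P_n(\widehat a)=\widehat\P_n(a)=\rho$. Run your entire block argument for~$\P$ with the word~$\widehat a$ in place of~$a$; the ``near'' cylinder in each application of the lower-decoupling inequality is then $[\widehat a]$, of length~$n$, so the hypothesis for~$\P$ applies directly with the correct constants $c_{n,K},\tau_{n,K}$. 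With this single adjustment your proof is complete.
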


Theorem~\ref{thm-abslate-3} is accompanied with suitable generalizations of Kontoyiannis' Theorem~\ref{thm-mixing-int}, and it is in fact these generalizations that are the main results of~\cite{CDEJRa}. They offer a different technical and conceptual perspective on the waiting-time characterization of cross entropies that are rooted in the works~\cite{Ko98, Ko2} and are centered around replacing mixing assumptions with decoupling assumptions. As an illustration, these results allow  to extend the waiting-time characterization of cross entropies to hidden Markov models that are only ergodic, a result which was inaccessible with previous approaches unless $\P=\Q$. 

%%%%%%%%
\section{Examples}
\label{sec-basic-ex}
%%%%%%%%%

As stated, Theorem~\ref{basic-thm} covers the following basic examples.

\begin{description}
		\item[Example 1. Markov measures.] Let $\P\in \cP_{\textnormal{inv}}$ be a stationary Markov measure generated by the Markov chain $(\pi, P)$.
		For our purposes, there is no loss of generality in assuming that all entries of the invariant probability vector~$\pi$ are strictly positive. The marginals
		of~$\P$ are given by the formula
		\[
			\P_n(a)=\pi_{a_1}p_{a_1,a_2}\cdots p_{a_{n-1},a_n}.
		\]
		The coefficient~$\psi^*_\P(0)$ can be bounded by the inverse of the smallest entry of~$\pi$.
		The measure~$\P$ is~$\psi$-mixing if and only if the transition matrix~$P$ is irreducible and aperiodic\,---\,or equivalently if, for some $N\in \nn$, all the entries of the matrix~$P^N$ are all strictly positive. By enlarging the alphabet,  multi-step  Markov measures  can be reduced to Markov measures, and so the above applies to them as well.

		\item[Example 2. Hidden Markov measures.] For our purposes, it is convenient to work in the positive-matrix product (PMP) representation of hidden Markov measures. For a discussion of this point of view, see~\cite[\S{2.2}]{BCJP21}. PMP measures are  generated by pairs $(\pi, \{P_a\}_{a\in \cA})$ where~$\pi$ is a $(d\times 1)$ probability vector with strictly positive entries, and $P_a$ is, for each~$a$, a $(d \times d)$ matrix with nonnegative entries such that $P:=\sum_{a\in {\cal A}}P_a$ is a stochastic matrix satisfying $\pi P=\pi$.
		The marginals of the PMP measure~$\P$ generated by $(\pi, \{P_a\})_{a\in \cA}$ are given by the formula
		\[
			\P_n(a)=\pi P_{a_1}\cdots P_{a_n}{\bf 1},
		\]
		where ${\bf 1}$ is the $(1\times d)$ vector with all entries equal to $1$.
		Obviously, $\P\in \cP_{\textnormal{inv}}$.
		The coefficient~$\psi^*_\P(0)$ can be bounded by the square of the inverse of the smallest entry of~$\pi$.
		 Moreover, if~$P$ is irreducible and aperiodic, then~$\P$ is $\psi$-mixing.\footnote{Unlike in the Markov case, this is not a necessary requirement.}

		\item[Example 3. Unravelings of quantum instruments.] Let $\cH$ be a finite-dimensional Hilbert space and~$\cB(\cH)$ the $C^\ast$-algebra of all linear maps $A: \cH \rightarrow \cH$. We denote by $\one$ the identity map in $\cB(\cH)$. A \emph{quantum instrument} on $\cH$ is a pair
		$(\rho, \{\Phi_a\}_{a\in \cA})$, where $\rho\in  \cB(\cH)$ satisfies $\rho >0$ and $\tr(\rho)=1$, and  $\Phi_a: \cB(\cH)\rightarrow \cB(\cH)$, $a\in \cA$, are completely positive linear maps
		such that $\Phi:=\sum_{a\in {\cal A}}\Phi_a$ satisfies $\Phi(\one)=\one$ and $\Phi^\ast(\rho)=\rho$.\footnote{The adjoint $\Phi^\ast$ is defined with respect to the inner product $\langle A, B\rangle=\tr(A^\ast B)$ on $\cB(\cH)$.} The \emph{unraveling} of $(\rho, \{\Phi\}_{a\in \cA})$ is the probability
		measure $\P\in \cP_{\textnormal{inv}}$ defined by the marginals
		\[\P_n(a)=\tr \left(\rho(\Phi_{a_1}\circ \cdots\circ \Phi_{a_n}[\one])\right).
		\]
		For references and detailed discussion of this class of measures from the quantum mechanical perspective, including the study of entropy production, we refer the reader to \cite{BJPP18,BCJP21}.  Although the unravelings can be traced back to the early days of quantum mechanics and have been extensively studied ever since, they have  been rediscovered in~\cite{Ku89} in the unrelated context of fractal analysis; see~\cite{JOP17} for more recent work on this subject and additional references.
		The coefficient~$\psi^*_\P(0)$ is finite by Lemma~3.4 in~\cite{BJPP18}. The map $\Phi$ is called primitive if, for some $N\in \nn$, the  map $\Phi^N$ is positivity improving. If this is the case, one easily shows that  $\psi'_\P(N) > 0$ and the measure $\P$ is~$\psi$-mixing by Bradley's theorem.
		In the context of this class of examples, Theorem~\ref{basic-thm} is an important complement to the works \cite{ BJPP18,BCJP21}.

		\item[Example 4. Gibbs measures.]  The measure $\P \in \cP$ is called a \emph{fully supported Gibbs measure} in the sense of Bowen if there exists a continuous function $F: \Omega \rightarrow \rr$, commonly called a (normalized) \emph{potential}, and a constant $C>0$ such that
		\begin{equation}C^{-1}\e^{-S_n F(x)}\leq \P_n(x_1^n)\leq C \e^{-S_n F(x)}
		\label{gibbs-con}
		\end{equation}
		for all $x\in \Omega$ and all $n\in \nn$, where $S_nF(x) := \sum_{j=0}^{n-1} F(\sigma^j(x))$.
		Gibbs measures play an important role in statistical mechanics and in the theory of dynamical systems. Note that the Gibbs condition~\eqref{gibbs-con} implies that
		\begin{equation}
		\label{basic-dec-1}
		C^{-2} \P_n(x_1^n) \P_m(x_{n+1}^{n+m})\leq  \P_{n+m} (x_1^{n+m}) \leq C^2 \P_n(x_1^n) \P_m(x_{n+1}^{n+m}).
		\end{equation}
		for all $x\in \Omega$ and all $n, m\in\nn$. In particular,
		\[
			0 < \psi'_\P(0) \leq \psi^*_\P(0) < \infty,
		\]
		and Bradley's Theorem~\ref{bradley-thm} yields that every  fully supported Gibbs measure is $\psi$-mixing. This reasoning extends from the case with full support to the case where the support is a topologically mixing subshift of finite type; see~\cite[\S{2}]{Wa05}.
\end{description}

Obviously, any Markov or multi-step Markov measure is a hidden Markov measure. Any  hidden Markov measure is an unraveling of a quantum instruments; see \cite[\S{2.1}]{BCJP21}.  The relation between Gibbs measures and measures described in Examples~2 and~3 is poorly understood.

In the context of Examples~1,~2 and~3, Theorems~\ref{thm-abslate-1} and~\ref{thm-abslate-3} allow to extend the conclusion of Theorem~\ref{basic-thm} to the cases where the stochastic matrix~$P$ and the map~$\Phi$ are only irreducible.\footnote{In the Markov case, irreducibility is equivalent to the ergodicity of $\P$.} Further generalizations involve countably infinite alphabets and the setting of~\cite{BD96}; see~\cite{CDEJRa} for details.

%%%%%%%%%%
\section{Remarks}
\label{sec-remarks}
%%%%%%%%%%

\begin{remark}{\bf Absolute continuity in  Theorem~\ref{thm-mixing-int}.}
	Let $\P\in \cP_{\rm erg}$, $\Q \in \cP_{\textnormal{inv}}$. Suppose that there exists $n_0$ and $a \in \cA^{n_0}$ such that  $\Q_{n_0}(a) = 0$ and $\P_{n_0}(a) > 0$.
	Then, for $\P$-almost all~$x$, there exists~$N(x)$ such that $\sigma^{N(x)} (x) \in [a]$ and
	\begin{align*}
		\Q\left(\left\{y : \tfrac 1{N+n_0} \log W_{N+n_0}(x,y) < \infty \right\}\right)
			&\leq \sum_{k=1}^\infty \Q(\sigma^{-k}[x_1^{N+n_0}]) \\
			&\leq \sum_{k=1}^\infty \Q_{n_0}[a] \\
			&= 0
	\end{align*}
	for all~$N \geq N(x)$. Since the above holds for $\P$-almost all~$x$, it follows that
	\begin{align*}
		(\P \times \Q)\left(\left\{ (x,y) : \liminf_{n\to\infty} \frac{\log W_n(x,y)}{n} < \infty \right\}\right)
			&= \int \Q\left(\left\{y : \liminf_{n\to\infty} \frac{\log W_n(x,y)}{n} < \infty \right\}\right) \d\P(x) \\
			&= 0.
	\end{align*}
	On the other hand, if there exists $n_0$ and $a \in \cA^{n_0}$ with $\Q_{n_0}(a) = 0$ and  $\P_{n_0}(a) > 0$, then $\Sc(\P_n|\Q_n) = \infty$ for all~$n\geq n_0$ and $\Sc(\P|\Q) = \infty$ as well.

	In Kontoyiannis' Theorem~\ref{thm-mixing-int}
	it is assumed that $\P_n \ll \Q_n$ for all~$n\in \nn$.  If this assumption fails, then the above discussion gives that
	$\Sc(\P|\Q)=\infty$ and
	 \[
    \lim_{n\to\infty} \frac{\log W_n(x,y)}{n} = \infty
  \]
  for $(\P\times\Q)$-almost all pairs $(x,y)$, and so Theorem~\ref{thm-mixing-int} remains valid.
	\label{remark-ac}
\end{remark}

\begin{remark}{\bf Absolute continuity in  Theorem~\ref{basic-thm}.}
	Related considerations apply to Theorem~\ref{basic-thm}. Suppose that  $\P_n\ll \widehat \P_n$ fails for some $n$. Then,  $\operatorname{ep}(\P)=\Sr(\P|\widehat \P)=+\infty$ and
	\begin{equation}
	\lim_{n\rightarrow \infty}\frac{1}{n}\log \widehat R_n(x)= \infty
	\label{ac-fin}
	\end{equation}
	for $\P$-almost all~$x$.
	Thus in this case Theorem~\ref{basic-thm} holds with both sides in \eqref{ajde} equal to $\infty$.

	To prove that \eqref{ac-fin} holds for $\P$-almost all~$x$, note  first that  \eqref{ajde-key}  holds as a consequence of Kingman's subadditive ergodic theorem which allows for random variables to take values in $[0, \infty]$. Set
		\[ \Omega_0:=\{ x\,:\, \P_n(x_1^n)>0\,\,\hbox{and}\,\, \widehat \P_n(x_1^n)=0\,\, \hbox{for some $n\in {\mathbb N}$} \}.\]
	The estimate \eqref{no-le} gives that, for $\P$-almost all~$x\in \Omega_0$, we have ${\widehat R}_n(x)=\infty$ for all~$n$ large enough.
	The proof of Step~1 gives that
	\[\liminf_{n\rightarrow \infty}\frac{1}{n}\log \widehat R_n(x)\geq \lim_{n\rightarrow \infty}-\frac{1}{n}\log
	\P_n(x_1^n)=\infty\]
	for $\P$-almost all~$x\in \Omega \setminus \Omega_0$. Hence, \eqref{ac-fin} holds for $\P$-almost all~$x$. Note that this argument does not make use of return times.
\label{remark-ac-1}
\end{remark}

\begin{remark}
{\bf Upper decoupling.}
\label{rem-ude}
	The only consequence of the assumption~$\psi^*_\P(0) < \infty$ that was used in the proof  of Theorem~\ref{basic-thm} is the upper-decoupling inequality \eqref{basic-dec-12}. This inequality ensured the existence of the cross entropy $\Sc(\P|\widehat \P)$ and the almost sure convergence expressed by~\eqref{ajde-key}. It was also used in the crucial way in the proof of~\eqref{est-1-ajde} and~\eqref{est-2-ajde} in Steps~2 and~3.
	Thus, it should  not come as a surprise that the key arguments go through with~$\psi^*_\P(0) < \infty$ replaced with $\psi^*_{\P}(\ell^*) < \infty$ for some $\ell^\ast \in \nn$\,---\,which is built in the definition of $\psi$-mixing\,---, provided that one adapts Fekete's lemma and Kingman's theorem to the corresponding upper-decoupling inequality.

	It is this focus on the role of decoupling that allows the generalizations beyond the concepts of mixing discussed in Section~\ref{sec-general}: decoupling properties are postulated in a form that does not involve mixing coefficients. The entire theory is then technically considerably more involved, but, on the positive side, these generalizations allow to reach regimes of applicability of the waiting-time characterization of cross entropy that were previously inaccessible~\cite{CDEJRa}.
\end{remark}

\begin{remark}{\bf Waiting times, cross entropy and mixing.}
\label{remark-wtre}
	The two basic ingredients  of Kontoyiannis' proofs of~\eqref{basic-cross} can be summarized as follows: for all $\epsilon >0$, the sets
	\begin{align*}
		{\cal B}_{n, \epsilon}:=\left\{ (x, y): W_n(x, y)\leq \e^{-\log \Q_n(x_1^n)-n\epsilon}\right\}
		\quad
		\text{and}
		\quad
	 	{\cal E}_{n, \epsilon}:=\left\{ (x,y) : W_n(x,y)\geq  \e^{-\log \Q_n(x_1^n)+n\epsilon}\right\}
	\end{align*}
	satisfy
	\begin{equation}
		\sum_{n=1}^\infty\P\times \Q({\cal B}_{n, \epsilon})<\infty
	\label{ajde-ajde-ko-3}
	\end{equation}
	and
	\begin{equation}
		\sum_{n=1}^\infty \P\times \Q({\cal E}_{n, \epsilon})<\infty,
	\label{ajde-ajde-ko-4}
	\end{equation}
	respectively. The proofs are then  complemented by a separate set of arguments (and assumptions) that ensure the existence of $\Sc(\P|\Q)$ and that
	$\lim_{n\rightarrow \infty}-\tfrac{1}{n}\log \Q_n(x_1^n)=\Sc(\P|\Q)$ for $\P$-almost all $x$; recall the discussion surrounding~\eqref{ajde-key} and \eqref{ajde-cb}--\eqref{ajde-key-key}. It turns out that the proof of
	\eqref{ajde-ajde-ko-3} is very general and involves no mixing assumption.
	The known proofs of~\eqref{ajde-ajde-ko-4} are deeper and critically depend on suitable mixing assumptions. This is reflected in the arguments of Steps~2 and~3. However, the waiting-time results entered only in Step~3 through the validity of \eqref{ajde-ajde-ko-4}\,---\,which in turn is the central ingredient in the proofs of waiting-time characterization of the cross entropy. This observation plays a key role in generalizations of Theorem~\ref{basic-thm} discussed in Section~\ref{sec-general}.

	Under $\psi$-mixing, a proof of~\eqref{ajde-ajde-ko-4} in the style of Kontoyiannis~\cite[\S{2}]{Ko98} goes as follows. Let $\ell \in \nn$ be such that $\psi'_{\Q}(\ell) > 0$.
	For $n \in \nn$ and $a \in \cA^n$ such that $\P_n(a) > 0$, we have
	\begin{align*}
		\P \times \Q(\{(x,y) : W_n(x,y) > t\} \cap [a] \times \Omega)
			&= \P_n(a) \Q\left(\left\{ y : y_k^{k+n-1} \neq a \text{ for all } k \leq t\right\}\right) \\[2mm]
			&\leq \P_n(a) \Q\left(\left\{ y : y_{j(n+\ell) + 1}^{j(n+\ell) + n} \neq a \text{ for all } 0 \leq j < J \right\}\right)
	\end{align*}
	for some natural number~$J$ depending on~$t$ and~$n$ in such a way that
	$ J(n+\ell) \geq t$.
	Let
	\[
		A_j = \left\{y : y_{j'(n+\ell) + 1}^{j'(n+\ell) + n} \neq a \text{ for all } 0 \leq j' < j \right\}
	\]
	and note that, if $\Q(A_j) = 0$ for some $j \leq J$, then the right-hand side of the last estimate vanishes and we need not go further. Assuming that $\Q(A_j) > 0$, we  further estimate
	\begin{align*}
		\P \times \Q(\{(x,y) : W_n(x,y) > t\} \cap [a] \times \Omega)
			&\leq \P_n(a) \Q(A_1) \prod_{j=1}^{J-1} \frac{\Q(A_{j+1})}{\Q(A_j)} \\
			&= \P_n(a) (1 - \Q_n(a)) \prod_{j=1}^{J-1} \left( 1 - \frac{\Q(A_j \cap \sigma^{-j(n+\ell)}[a] )}{\Q(A_j)}\right).
	\end{align*}
	Since
	$A_j \in \mathcal{F}_{jn + (j-1)\ell}$, we may use  $\psi$-mixing coefficients to write
	\begin{align}
	\label{eq:almost-sld}
		\frac{\Q(A_j \cap \sigma^{-j(n+\ell)}[a])}{\Q(A_j)}
		&\geq
		\psi'_{\Q}(\ell) \Q(\sigma^{-j(n+\ell)})
		=
		\psi'_{\Q}(\ell) \Q_n(a).
	\end{align}
	Because we use $\psi'_{\Q}(\ell)$ as a lower bound, we may assume that $\psi'_{\Q}(\ell) < 1$. Now,
	\begin{align*}
		\P \times \Q(\{(x,y) : W_n(x,y) > t\} \cap [a] \times \Omega)
			&\leq \P_n(a) (1-\Q_n(a))(1 - \psi'_{\Q}(\ell)\Q_n(a))^{J-1} \\[1mm]
			&\leq \P_n(a)  (1 - \psi'_\Q(\ell))^{-1}(1 - \psi'_{\Q}(\ell)\Q_n(a))^{J}.
	\end{align*}
	Using the above estimate with $t = \Exp{n\epsilon} \Q_n(a)^{-1}$ and an appropriate~$J$, we find
	\[
		\P \times \Q(\{(x,y) : W_n(x,y)\Q_n(x_1^n) >  \Exp{n\epsilon}\} \cap [a] \times \Omega)
			\leq \P_n(a) (1 - \psi'_\Q(\ell))^{-1} (\psi'_{\Q}(\ell))^{-1}  (n + \ell) \Exp{-n\epsilon}
	\]
	for $n$ large enough.
	We have used the basic inequalities $(1-q)^{1/q} \leq \Exp{-1}$ and $\Exp{-s} \leq s^{-1}$ for $q \in (0,1)$ and $s > 0$ respectively.
	Summing over $a \in \supp \P_n$ yields
	\begin{align*}
		\P\times \Q({\cal E}_{n, \epsilon})
			&\leq  (1 - \psi'_\Q(\ell))^{-1} (\psi'_{\Q}(\ell))^{-1}  (n + \ell) \Exp{-n\epsilon},
	\end{align*}
	and~\eqref{ajde-ajde-ko-4}  follows.

\label{remark-wt}
\end{remark}

\begin{remark}
\label{remark-wcre}
	{\bf Work of Chazottes and Redig.}  When this paper was in the final stage of preparation, we have learned of the work \cite{CR05} where Theorem~\ref{basic-thm} was proven under the assumption that~$\P$ is a Gibbs measure with potential~$F$ of summable variations.
	The proof there follows a completely different strategy and relies on fine upper bounds in the exponential approximation of hitting times for $\psi$-mixing processes obtained previously in~\cite{Ab04, AV09}; see Key-lemmas~1 and~2 in \cite[\S{5}]{CR05}.
	In the same work, these bounds were further used in study of fluctuations (central limit theorem, large deviation principle) of a related class of entropy production estimators.

	The technical and conceptual approach advocated here and in the follow-up work \cite{CDEJRa} is rooted in the program~\cite{JPS}, with the goal of deriving robust theories in terms of underlying assumptions that justify their wide range of practical applications. Our proof of Theorem~\ref{basic-thm} is in a very different spirit from that of Theorem~1(1) of~\cite{CR05}. It emphasizes the decoupling aspect of $\psi$-mixing that is implicit in the proof of Kontoyiannis~\cite[\S{2}]{Ko98},  and which\,---\,once recognized\,---\,allows for far-reaching generalizations. Such generalizations cannot be derived on the basis of the fine estimates established in~\cite{Ab04, AV09}, which are expected to hold only under strong mixing assumptions.  For further discussions of this point of view, we refer the reader to~\cite{CDEJRa} and~\cite{CJPS19}.
\end{remark}

\begin{remark}\label{rem-DNA}
	\textbf{The choice of involution in DNA.} After the publication of~\cite{CR05}, estimators of entropy production in DNA sequences have been numerically computed using recurrence times and longest match lengths~\cite{SGM21,SG21}. To our knowledge, all such computations in the literature correspond to the case where the involution~$\theta$ is equal to the identity on $\{\texttt{C},\texttt{G},\texttt{A},\texttt{T}\}$. With such~$\theta$   the entropy production  measures only the directional irreversibility of the sequence.
	However, in view of conjectures on the so-called ``extended Chargaff symmetry''~\cite{ACDE18}, it is interesting to consider the notion of entropy production that arises from the choice of involution~$\theta_{\text{Ch}}$ already described in Section~\ref{ssec:ep}: $\theta_{\textnormal{Ch}}(\texttt{C}) = \texttt{G}$ and~$\theta_{\textnormal{Ch}}(\texttt{T}) = \texttt{A}$.
	Whereas the original symmetry of~\cite{RKC68} can be cast as the equality $\P_1(a) = \P_1(\theta_{\textnormal{ch}}(a)) = \widehat{\P}_1(a)$ for all $a \in \{\texttt{C},\texttt{G},\texttt{A},\texttt{T}\}$, the extended Chargaff symmetry refers to the stronger conjectured identity $\P = \widehat{\P}$ at the level of the full measures.
	To investigate numerically the validity of this second identity, we estimate~$\operatorname{ep}(\P; \theta_{\text{Ch}})$ using Corollary~\ref{cor:ep-from-matches} in Figure~\ref{fig:ep-DNA}. For the sake of completeness, we compare our results with the analogous estimates for~$\operatorname{ep}(\P; \operatorname{id})$.
\end{remark}

%\newpage

\begin{figure}[h]
	\centering
	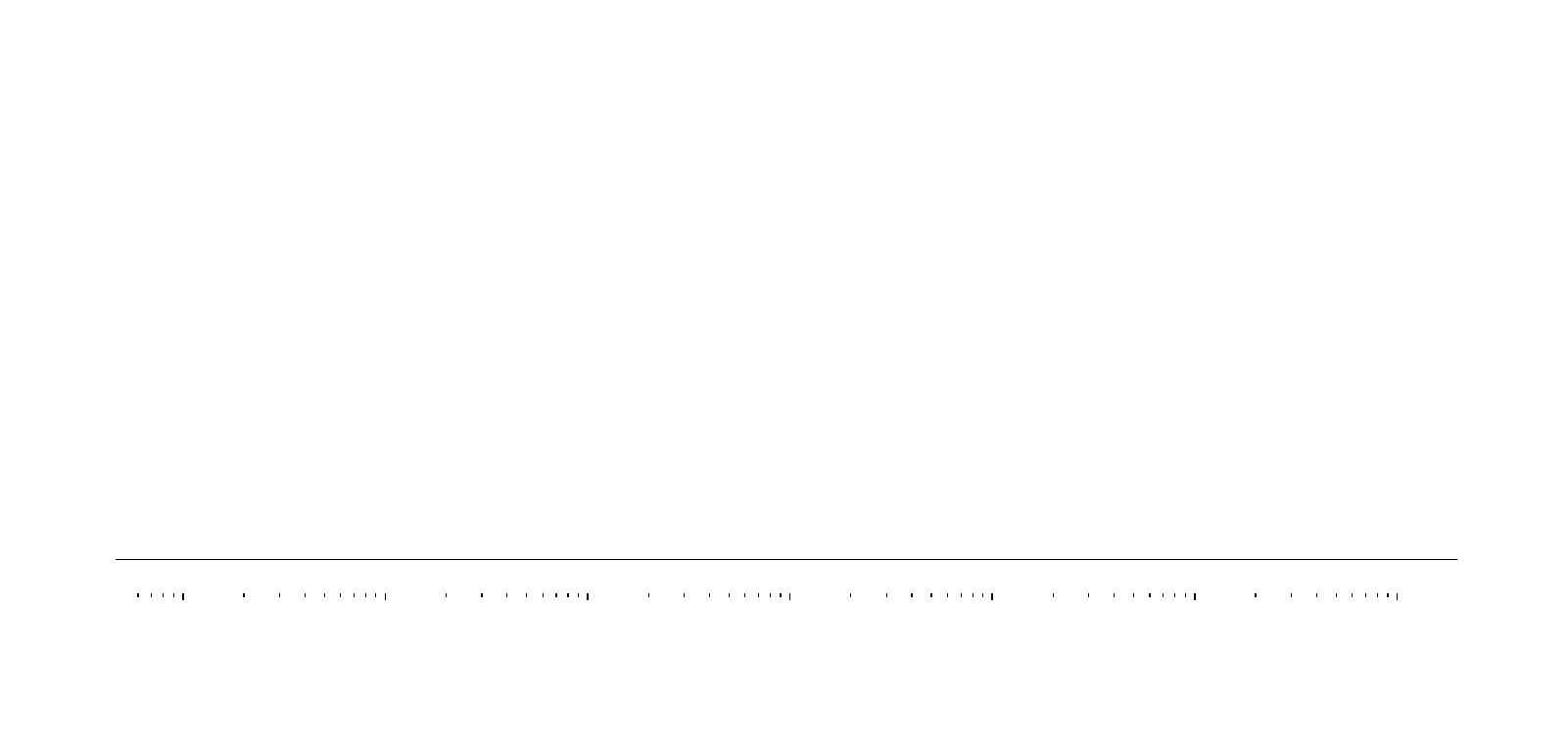
	\vspace{-.5in}
	\caption{In Chromosome~1 of \emph{Homo sapiens}, the estimator of $\operatorname{ep}(\P;\theta)$ of Corollary~\ref{cor:ep-from-matches} can be computed in two important cases: with $\theta = \operatorname{id}$ (orange crosses) and with $\theta = \theta_\textnormal{Ch}$ (blue points).
	Averages over $1\,000$ realizations obtained by choosing uniformly a random initial point in the first half of the sequence are presented together with their estimated standard error of the mean as the window size~$m$ ranges from~$100$ to~$100\,000\,000$. With our choice of the natural logarithm, the $0.1$-unit ticks on the vertical axis correspond to approximately $0.144$ bit per character.
	The sequence used is from the Genome Reference Consortium Human Build 38, Patch Release 14; see~\cite{GRCh38} and~\cite{S+17}.}
\label{fig:ep-DNA}
\end{figure}

%%%%%%%%%%%%%%%%%%%
\appendix

\section{Appendix}

\begin{proof}[Proof of Theorem \ref{thm-abslate-1}] 
	One follows the steps of the proof of Theorem~\ref{basic-thm} in Section~\ref{sec-proof-basic}, with the following changes. In Step~1, the $\P$-almost sure validity of~\eqref{ajde-key} now relies on Hypothesis~i and adaptations of Fekete's lemma and Kingman's theorem to the corresponding generalized subadditivity condition which are discussed in~\cite{Ra22}. The set $E_{n, \epsilon}$ is defined in the same way, but one takes
	\[B_{n, \epsilon}:=\left\{ x\,:\, \widehat R_{n}(x)\leq \e^{-\log \widehat \P_{n-\tau_n}(x_1^n)-n\epsilon}\right\}.
	\]
	In Step~2, one starts with 
	\begin{align*}
			\P( B_{n,\epsilon} ) = \sum_{a \in \supp\P_n} \sum_{j=1}^{\lfloor\Exp{- \log \widehat{\P}_{n-\tau_n}(a_1^{n-\tau_n}) - n \epsilon}\rfloor}\P\left(\left\{ x :  \widehat {R}_n(x) = j, x_1^n=a \right\}\right),
		\end{align*}
	for $n$ large enough, and estimates
	\begin{equation}
		\begin{split}
			\P(B_{n,\epsilon})
				&\leq  \sum_{a \in \supp\P_n} \sum_{j=1}^{\lfloor\Exp{- \log \widehat{\P}_{n-\tau_n}(a_1^{n-\tau_n}) - n \epsilon}\rfloor}\sum_{\zeta \in \cA^{j-1}} \P_{n+j-1+n} (a \zeta\, \widehat{a}) \\
				&\leq   \sum_{a \in \supp\P_{n-\tau_n}}\sum_{b, b^\prime \in {\cal A}^{\tau_n}} \sum_{j=1}^{\lfloor\Exp{- \log \widehat{\P}_{n-\tau_n}(a) - n \epsilon}\rfloor} \sum_{\zeta \in \cA^{j-1}} \P_{n+j-1+n} (ab \zeta\, b^\prime\widehat{a}) \\[2mm]
				&\leq  \e^{2c_{n-\tau_n}}\sum_{a \in \supp\P_{n-\tau_n}} \sum_{j=1}^{\lfloor\Exp{- \log \widehat{\P}_{n-\tau_n}(a) - n \epsilon}\rfloor}\sum_{\zeta \in \cA^{j-1}} \P_{j-1} ( \zeta)\P_{n-\tau_n}(a)\widehat \P_{n-\tau_n}( {a})\\[2mm]
				&\leq  \e^{2c_{n-\tau_n}}\sum_{a\in\supp\P_{n-\tau_n}}  \Exp{-\log\widehat{\P}_{n-\tau_n}(a)-n\epsilon}\P_{n-\tau_n}(a)  \widehat {\P}_{n-\tau_n}(a)\\[2mm]
				&=\e^{2c_{n-\tau_n}-n\epsilon}.
		\end{split}
	  	\label{no-le-ma}
	  \end{equation}
	  The final estimate and the assumption that $c_n$ and $\tau_n$ are~$o(n)$ give the desired summability condition. 
	  In Step~3, one first chooses, for $n$ large enough, a natural number~$m(a)$ so that 
	  \begin{equation}
	      \Exp{-\log\widehat{\P}_n(a) + \frac 12 n\epsilon} \leq m(a) - n \leq m(a) - n + \tau_n <\Exp{-\log\widehat{\P}_n(a) +  n\epsilon},
	  \label{choo-ma}
		\end{equation}
	    and replaces \eqref{mar-late} with the estimates
	    \[
	    \begin{split}
	      \P(E_{n,\epsilon}) &\leq \sum_{a\in \supp \P_n}\sum_{b^\prime \in {\cal A}^{\tau_n}}\sum_{b \in \cA^{m}} \P \left( \left\{ x   : \widehat {R}_n(x)  > m-n+\tau_n \textnormal{ and } x_{1}^{n+\tau_n+m}=ab^\prime b \right\} \right) \\[2mm]
	      &
	      \leq\sum_{a\in \supp \P_n}\sum_{b^\prime \in {\cal A}^{\tau_n}} \sum_{b \in \cA^{m}} \P\left(\left\{ x : x_1^{n+\tau_n+m}=ab^\prime b \textnormal{ and }  b_k^{k+n-1}\not=\widehat a \textnormal{ for } 1\leq k \leq m-n\right\}\right) \\[2mm]
	      &
	      \leq  \e^{c_n}\sum_{a\in \supp \P_n} \sum_{b \in \cA^m} \P_n\times \P_m\left(\left\{ (a, b) : b_k^{k+n-1}\not=\widehat a \textnormal{ for } 1\leq k \leq m-n \right\}\right).
		\end{split}
		\]
		At this point one proceeds in exactly the same way as in Section \ref{sec-proof-basic} to derive the estimate
		\[
		\P(E_{n, \epsilon})\leq \e^{c_n} \P \times \widehat \P \left( \left\{ (x, y) : W_n(x, y)> \Exp{-\log\widehat{\P}_n(x_1^n) + \frac 12 n\epsilon} \right\} \right),
		\]
	which, combined with Hypothesis~ii, yields the desired summability. 
\end{proof}

\begin{proof}[Proof of Theorem \ref{thm-abslate-2}.] 
	One again follows the same strategy. In Step~1, the $\P$-almost sure validity of~\eqref{ajde-key} now relies on Hypothesis~i''.
	In Step~2, the sets $B_{n, \epsilon}$ are the same as in \eqref{thesame} and one starts with the identity 
	\begin{align*}
		\P( B_{n,\epsilon} ) &= \sum_{a \in \supp\P_n} \sum_{j=1}^{\lfloor \Exp{- \log \widehat{\P}_n(a) - n \epsilon}\rfloor }\P(\{ x :  \widehat{R}_n(x) = j \} \cap [a]),
	\end{align*}
	which gives
	\begin{align*}
		\P( B_{n,\epsilon} ) &\leq \sum_{a \in \supp\P_n} \sum_{j=1}^{\lfloor \Exp{- \log \widehat{\P}_n(a) - n \epsilon}\rfloor }\P([a_1^{n-\tau_{n,K}}] \cap \{ x :  \widehat{R}_n(a\sigma^{n}(x) = j \}) \\
			&= \sum_{a \in \supp\P_n} \P\left(\sigma^{-\tau_{n, K}}[a_1^{n-\tau_{n,K}}] \cap \bigcup_{j=1}^{\lfloor \Exp{- \log \widehat{\P}_n(a) - n \epsilon}
			\rfloor }\sigma^{-\tau_{n, K}}\{ x :  \widehat{R}_n(a\sigma^{n}(x)) = j \}\right)
	\end{align*}
	as soon as $n$ is large enough that $n - \tau_{n,K}  \geq 1$ (this is possible because $\tau_{n,K}$ is~$o(n))$.
	Since each of the $\lfloor \Exp{- \log \widehat{\P}_n(a) - n \epsilon}\rfloor $ sets of the form $\sigma^{-\tau_{n,K}}\{ x :  \widehat{R}_n(a\sigma^{n}(x)) = j \}$ is also necessarily of the form $\sigma^{- n - \tau_{n,K}}(B)$ for some $B \in \cF_{\textnormal{fin}}$ and has probability bounded above by $\widehat{\P}_n(a)$, Hypothesis~i' gives  that 
	\[
	\begin{split}
		\P( B_{n,\epsilon} )
			&\leq \sum_{a \in \supp\P_n} \left( \Exp{c_{n, K}} \P_{n-\tau_{n, K}}(a_1^{n-\tau_{n,K}})  \Exp{- \log \widehat{\P}_n(a) - n \epsilon} \widehat{\P}_n(a) + \Exp{-nK} \right) \\[2mm]
			&\leq \Exp{c_{n, K} + \tau_{n,K} \log|\cA| - n\epsilon}  + \Exp{n\log|\cA|-nK},
	\end{split}
	\]
	where we used $\supp \P_n \subseteq \supp \P_{n-\tau_{n,K}} \times \cA^{\tau_{n,K}}$ to get the second inequality. If  $K > \log|\cA|$, the last estimate gives the desired summability  since both~$c_{n,K}$ and~$\tau_{n,K}$ are~$o(n)$. 
	We now turn to  Step~3. With $E_{n, \epsilon}$ as in \eqref{thesame} and 
	\begin{align*}
			G_{n,K} := \{x : \widehat{\P}([x_1^n] )\geq \Exp{- nK} \},
		\end{align*}
	it suffices to show that  for all~$\epsilon > 0$ and all~$K \in \nn$ large enough, the summability condition
	\begin{equation}
  	\label{eq:ub-summability-cross}
    	\sum_{n = 1}^\infty \P(E_{n,\epsilon} \cap G_{n,K}) < \infty
  	\end{equation}
	holds.\footnote{On the set where   $\tfrac 1n \log \widehat{\P}_n(x_1^n) \to -\infty$ as $n\to\infty$ the inequality $\limsup  \tfrac 1n\log \widehat{R}_n(x) \leq h_{\widehat{\P}}(x)$ is vacuously true (recall Hypothesis~i'').}
	The identity
	\begin{align*}
		\P(E_{n,\epsilon} \cap G_{n,K})
		&
		=  \sum_{a \in \cA^n} \P\left(\left\{ x \in G_{n,K} : \widehat{R}_n(x)  \geq \Exp{-\log\widehat{\P}_n(a) + n\epsilon} \right\} \cap [a]\right)
	\end{align*}
	gives
	% \begin{align}
	%
	% 	\P(B_{n,\epsilon} \cap G_{n,K})
	% 		&= \P(\{ y : y_{n+k}^{n+k+n-1} \neq \hat{a}, k = 1, \dotsc, J_n^{[a]}\} \cap [a]),
	% \end{align}
	\begin{align}
	\label{eq:P-Bn}
		\P(E_{n,\epsilon} \cap G_{n,K})
			\leq \sum_{\substack{a \in \supp \P_n \\ \widehat{\P}_n(a) \geq \Exp{-nK}}}
				\P\left([a] \cap \bigcap_{k=1}^{\lfloor \Exp{-\log\widehat{\P}_n(a) + n\epsilon}  \rfloor -1}\sigma^{-n-k+1} ([\widehat a]^{\mathsf{C}})\right).
	\end{align}
	Setting
	\[
		C_{n, K, \epsilon}(a):=\lfloor \Exp{-\log\widehat{\P}_n(a) + n\epsilon}  \rfloor -\tau_{n,K}-1,
	\]
	and using an obvious inclusion, $\sigma$-invariance, and then  Hypothesis~i', one derives
	\begin{equation}
	\label{eq:cons-UD-ub}
	\begin{split}
		\P\left([a] \cap \bigcap_{k=1}^{\lfloor \Exp{-\log\widehat{\P}_n(a) + n\epsilon}  \rfloor -1}\sigma^{-n-k+1} ([\widehat a]^{\mathsf{C}})\right) 
			&\leq \P\left([a] \cap \bigcap_{k=\tau_{n,K}+1}^{\lfloor \Exp{-\log\widehat{\P}_n(a) + n\epsilon}  \rfloor -1}\sigma^{-n-k+1} ([\widehat a]^{\mathsf{C}})\right) \\
			&\leq \Exp{c_{n,K}} \P([a] )\P\left(\bigcap_{k=1}^{C_{n, K, \epsilon}(a)} \sigma^{-k+1} ([\widehat a]^{\mathsf{C}})\right) + \Exp{-Kn}.
	\end{split}
	\end{equation}
	The identities 
	\begin{align*}
		\P \left(\left\{ y : y_{k}^{k+n-1} \neq \widehat{a}, \, 1\leq k \leq C_{n,K, \epsilon}(a)\right\}\right)
			& =  \P_{C_{n, K, \epsilon}+ n}\left(\left\{ b : b_k^{k+n-1} \neq \widehat{a}, \,1\leq k \leq C_{n,K, \epsilon}(a)\right\} \right)  \\[1mm]
			& =  \P_{C_{n, K, \epsilon} + n}\left(\left\{ \widehat{b} : b_k^{k+n-1} \neq a, \,1\leq k \leq C_{n, K, \epsilon}(a)\right\} \right)  \\[1mm]
			& =  \widehat{\P}_{C_{n, K, \epsilon}+ n}\left(\left\{ b : b_k^{k+n-1} \neq a, \,1\leq k \leq C_{n, k, \epsilon}(a)\right\} \right),
	\end{align*}
	and
	\[
	\begin{split}
		\P([a])\widehat{\P}_{C_{n,K, \epsilon}(a) + n} &\left(\left\{ b : b_k^{k+n-1}\not= a,\,1\leq k \leq C_{n, K, \epsilon}(a)\right\}\right)\\[1mm]
		& = (\P\times\widehat{\P})\left(\left\{(x,y) : W_n(x,y) >  C_{n,K,\epsilon}(x_1^n)\right\} \cap ([a] \times \Omega)\right),
	\end{split}
	\]
	further give
	\begin{equation}
	\begin{split}
		&\P(B_{n,\epsilon} \cap G_{n,K}) \\[2mm]
      		&\qquad \leq \sum_{\substack{a \in \supp \P_n \\ \widehat{\P}_n(a) \geq \Exp{- nK}}} \Exp{c_{n,K}} (\P\times\widehat{\P})\left(\left\{(x,y) : W_n(x,y) > C_{n, K, \epsilon}(x_1^n)\right\} \cap ([a] \times \Omega)\right) + \Exp{-nK} \\[2mm]
	 		&\qquad \leq \Exp{c_{n,K}}(\P\times\widehat{\P})\left(\left\{(x,y) : \widehat{\P}([x_1^n]) \geq \Exp{-nK} \text{ and } W_n(x,y) > C_{n,K, \epsilon}(x_1^n) \right\}\right)+ \Exp{n\log|\cA|-nK}.
	\end{split}
	\label{last-mar}
	\end{equation}
	Note that  for $n$ large enough,
	\begin{equation}
		C_{n, K, \epsilon}(x_1^n) > \Exp{-\log\widehat{\P}([x_1^n]) + \frac 12 n\epsilon}
	\label{late-mar}
	\end{equation}
	for all $x_1^n\in {\cal A}^n$. Taking $K>\log |\cA|$, the estimates~\eqref{last-mar}--\eqref{late-mar} and Hypothesis  ii' give the summability condition~\eqref{eq:ub-summability-cross}.
\end{proof}

%%%%%%%%%%%%%%

\end{document}